\documentclass[pdflatex,sn-mathphys-num]{sn-jnl}

\usepackage{graphicx}
\usepackage{multirow}
\usepackage{amsmath,amssymb,amsfonts}
\usepackage{amsthm}
\usepackage{mathrsfs}
\usepackage[title]{appendix}
\usepackage{xcolor}
\usepackage{textcomp}
\usepackage{manyfoot}
\usepackage{booktabs}
\usepackage{algorithm}
\usepackage{algorithmicx}
\usepackage{algpseudocode}
\usepackage{listings}

\theoremstyle{thmstyleone}
\newtheorem{theorem}{Theorem}
\newtheorem{lemma}[theorem]{Lemma}
\newtheorem{proposition}[theorem]{Proposition}
\newtheorem{corollary}[theorem]{Corollary}

\theoremstyle{thmstyletwo}

\newtheorem{remark}{Remark}

\theoremstyle{thmstylethree}
\newtheorem{definition}{Definition}
\newtheorem{assumption}[theorem]{Assumption}

\begin{document}

\title[Quantum Adiabatic Theorem for Non-Hermitian Dynamics]{The Quantum Adiabatic Theorem for Non-Hermitian Dynamics}

\author[1]{\fnm{G. M. V. S.} \sur{Aponso}}\email{venura.sachi@gmail.com}

\author*[1]{\fnm{B. P. W.} \sur{Fernando}}\email{bpw@sci.sjp.ac.lk}

\author[1]{\fnm{K. K. W. A. S.} \sur{Kumara}}\email{sarath@sjp.ac.lk}

\author[2]{\fnm{R. P. K.} \sur{De Silva}}\email{kaushika.desilva@uwa.edu.au}

\affil[1]{\orgdiv{Department of Mathematics}, \orgname{University of Sri Jayewardenepura}, \country{Sri Lanka}}

\affil[2]{\orgdiv{Centre for Quantum Information, Simulation and Algorithms}, \orgname{University of Western Australia}, \country{Australia}}

\abstract{We establish a quantitative adiabatic estimate for a class of finite-dimensional non-Hermitian Schr\"odinger dynamics. The original non-Hermitian Hamiltonian is assumed to be diagonalizable with real spectrum and non-crossing eigenvalues. We then construct a dynamically compatible time-dependent metric operator, and its positive square root defines a Dyson map. The associated Dyson-transformed Hamiltonian is Hermitian. 

When this Dyson-transformed Hamiltonian satisfies the Hermitian adiabatic assumption, the standard resolvent-projection method gives an explicit adiabatic estimate in the Hermitian representation. Pulling this estimate back through the Dyson map gives an approximation in the original non-Hermitian representation. The resulting Dyson-pulled-back projections are then compared with the spectral projections of the original non-Hermitian Hamiltonian by using a contour-resolvent estimate. The final bound contains two contributions: the pulled-back Hermitian adiabatic error and the projection-comparison error. A two-level non-Hermitian model is presented to illustrate the hypotheses of the
theorem and the two contributions appearing in the final estimate.}

\keywords{Quantum adiabatic theorem,
Non-Hermitian Schr\"odinger dynamics,
Dyson map,
Spectral projections,
Adiabatic approximation}

\pacs[MSC Classification]{Primary 81Q12; Secondary 81Q15, 47A55}

\maketitle

\section{Introduction}\label{sec1}

The quantum adiabatic theorem is a fundamental result in quantum mechanics, describing the evolution of systems governed by slowly varying Hamiltonians. In its operator-theoretic form, established by Kato~\cite{Kato1950}, the theorem asserts that if a Hermitian Hamiltonian depends smoothly on time and possesses a spectral gap, then a system initially prepared in a given eigenspace remains close to the corresponding instantaneous eigenspace throughout the evolution, with an error of order $O(1/T)$. This framework has been refined using resolvent-projection methods, leading to explicit error bounds in terms of spectral gaps and regularity assumptions~\cite{Avron1987,Jansen2007,Scherer2019}. These results form the basis of many applications, including geometric phases, molecular dynamics, and adiabatic quantum computation.

A key structural feature underlying the classical adiabatic theorem is the Hermiticity of the Hamiltonian. This guarantees unitary evolution and allows the use of orthogonal spectral projections and resolvent-based techniques. In contrast, for non-Hermitian Hamiltonians, the evolution is not unitary with respect to the standard inner product, and the spectral structure is not preserved in a direct way. As a result, the standard adiabatic framework cannot be applied without modification.

Several extensions of the adiabatic principle have been developed in settings that go beyond the standard closed Hermitian case. In the context of open quantum systems, Sarandy and Lidar~\cite{SarandyLidar2005} formulated an adiabatic approximation for time-dependent master equations by working at the level of the dynamical superoperator and its Jordan blocks. In a different direction, Fleischer and Moiseyev~\cite{FleischerMoiseyev2005} derived an adiabatic theorem for non-Hermitian time-dependent open systems using the $(t,t')$-formalism together with complex scaling. Earlier work by Mostafazadeh~\cite{Mostafazadeh1999} also showed that adiabatic structures and geometric phases continue to arise for non-Hermitian Hamiltonians. More recently, Huang and Lee~\cite{HuangLee2025} proved that the adiabatic theorem remains valid for a class of diagonalizable non-Hermitian quantum systems with real eigenvalues, using biorthogonal methods and the complex geometric phase. These developments show that adiabatic behavior can persist beyond the Hermitian setting, but they also indicate that the underlying mechanism must be reformulated carefully once unitarity with respect to the standard inner product is lost.

The aim of this paper is to develop an operator-theoretic adiabatic framework for a
class of finite-dimensional non-Hermitian Schr\"odinger dynamics. We assume that
the original Hamiltonian is diagonalizable, has real spectrum, and has
non-crossing eigenvalues. Under these assumptions, a dynamically
compatible metric operator can be constructed, whose
positive square root defines a time-dependent Dyson map. The corresponding
Dyson-transformed Hamiltonian is Hermitian, and whenever it satisfies the
Hermitian adiabatic assumption, the quantitative Hermitian adiabatic theorem
becomes applicable.

The main result of the paper is the transfer of this Hermitian adiabatic estimate
back to the original non-Hermitian representation. The Dyson transformation first yields an explicit approximation with respect to the Dyson-pulled-back spectral
projections. These projections are then compared with the spectral projections of
the original non-Hermitian Hamiltonian by treating the Hermitian counterpart as a
perturbation of a similarity transform of the original operator. A
contour-resolvent estimate gives an explicit bound for the difference between
the two projection families.

The final estimate is stated directly in terms of the spectral subspaces of the
original non-Hermitian Hamiltonian. It consists of two contributions: the
pulled-back Hermitian adiabatic error and the projection-comparison error. We
also include a two-level non-Hermitian example to illustrate how the hypotheses
can be verified and how both contributions appear in the final bound.

The paper is organized as follows. Section~2 introduces the time-rescaled
framework, the adiabatic assumptions, the metric operator, and the Dyson
transformation. Section~3 applies the quantitative Hermitian adiabatic theorem
and constructs the Dyson-pulled-back adiabatic evolution. Section~4 compares the
pulled-back projections with the spectral projections of the original
non-Hermitian Hamiltonian and establishes the main estimate. Section~5 presents
a two-level example illustrating the hypotheses and the resulting bound.
Section~6 discusses the scope, limitations, and possible extensions of the
framework.

\section{Preliminaries and the Dyson Framework}

Throughout this work, we assume that the underlying Hilbert space $\mathbb{H}$
is finite-dimensional. Consequently, all linear operators on $\mathbb{H}$ are
bounded. We will denote the set of bounded self-adjoint (Hermitian) operators on the Hilbert space $\mathbb{H}$ by $B_{sa}(\mathbb{H})$ and for a bounded operator $A \in B(\mathbb{H})$, we denote its range by
$\operatorname{Ran} A$.

\subsection{Time Rescaling}

Let $t\in[t_0,t_{\mathrm{fin}}]$ denote the physical time variable and let,
\[
T:=t_{\mathrm{fin}}-t_0
\]
denote the total evolution time. The time rescaling is the change of
variables,
\[
t = t_0 + sT, \qquad s\in[0,1],
\]
which maps the physical time interval $[t_0,t_{\mathrm{fin}}]$ onto the
dimensionless interval $[0,1]$.

Under this transformation the derivatives satisfy,
\[
\frac{d}{dt} = \frac{1}{T}\frac{d}{ds},
\]
and any time-dependent operator $A(t)$ may be written in the rescaled form,
\[
A_T(s):=A(t_0+sT).
\]

\subsection{Adiabatic assumption for the time-rescaled Hamiltonian}

We impose an adiabatic assumption on the time-rescaled Hamiltonian governing the
dynamics. This assumption is inherited from the conventional adiabatic framework for the
standard Schr\"odinger equation and is imposed directly on the time-rescaled Hamiltonian
$H_T(s)$.

\begin{assumption}[Adiabatic assumption] \label{AA}
We assume that,
\[
H_T:[0,1]\longrightarrow B_{sa}(\mathbb H)
\]
satisfies the following conditions:
\begin{enumerate}
\item \(H_T\) is twice continuously differentiable with respect to \(s\),
\item the eigenvalues of \(H_T(s)\) have constant multiplicities on \([0,1]\),
\item the distinct eigenvalues do not cross. For every fixed \(j\), define $g_{j}(s):=\min_{k\neq j}|E_{k}(s)-E_{j}(s)|.$
Then, $g_{j}(s)>0$, $s\in[0,1]$.
\end{enumerate}
\end{assumption}

This assumption ensures the existence of a smooth spectral decomposition of $H_T(s)$ for all $s\in[0,1]$.

\subsection{Adiabatic theorem for standard Schr\"odinger dynamics}

For completeness, we recall the operator-theoretic formulation of the quantum
adiabatic theorem used throughout this work. The statement below follows the
resolvent-projection framework developed in Appendix~G of \cite{Scherer2019}
(see in particular Theorem~G.15 therein).

For a fixed index $j\in I$, let $P_j(s)$ be the orthogonal spectral projection onto the \(j\)-th
eigenspace of $H_T(s)$. The spectral gap function is defined by,
\[
g_j(s):=\min_{k\neq j}|E_k(s)-E_j(s)|
\]
Let $U_T(s)$ denote the actual evolution generated by $H_T(s)$,
\[
i\,\frac{d}{ds}U_T(s)=T\,H_T(s)\,U_T(s), \qquad U_T(0)=I,
\]
and let $U_{A,j}(s)$ denote the corresponding adiabatic evolution generated by,
\[
H_{A,j}(s):=T\,H_T(s)+i[\dot P_j(s),P_j(s)]
\]

\begin{theorem}[Quantum adiabatic theorem {\cite[Theorem~G.15]{Scherer2019}}]
\label{Thm:StandardAdiabatic}
For every $j\in I$ and $s\in[0,1]$,
\[
\big\|\big(U_{A,j}(s)-U_T(s)\big)P_j(0)\big\|
\le
\frac{C_j(s)}{T},
\]
where,
\[
C_j(s)
:=
\frac{\|\dot H_T(s)\|}{g_j(s)^2}
+
\frac{\|\dot H_T(0)\|}{g_j(0)^2}
+
\int_0^s
\left(
\frac{\|\ddot H_T(u)\|}{g_j(u)^2}
+
10\,\frac{\|\dot H_T(u)\|^2}{g_j(u)^3}
\right)du .
\]
\end{theorem}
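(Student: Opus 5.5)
The plan is the standard resolvent--projection argument. We compare the two evolutions through the wave operator
\[
W(s):=U_{A,j}(s)^{*}U_T(s),
\]
so that $\|(U_{A,j}(s)-U_T(s))P_j(0)\|=\|(W(s)-I)P_j(0)\|$ by unitarity of $U_{A,j}$. Since $H_{A,j}$ is self-adjoint ($i[\dot P_j,P_j]$ is Hermitian because $[\dot P_j,P_j]$ is anti-Hermitian), $U_{A,j}$ is indeed unitary.

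\textbf{Step 1 (intertwining).} First record Kato's intertwining property: $U_{A,j}(s)P_j(0)=P_j(s)U_{A,j}(s)$. This follows by differentiating $U_{A,j}^*P_jU_{A,j}$ and using $P_j\dot P_jP_j=0$, so that $\dot P_j=[[\dot P_j,P_j],P_j]$; the $T H_T$ term commutes with $P_j$.

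\textbf{Step 2 (Duhamel).} Differentiating $W$ gives
\[
\dot W(s)=-U_{A,j}^*\,[\dot P_j,P_j]\,U_T ,
\]
hence
\[
W(s)-I=-\int_0^s U_{A,j}^*(u)[\dot P_j(u),P_j(u)]U_T(u)\,du .
\]
Acting on $P_j(0)$ and using Step 1, only the off-diagonal piece of $[\dot P_j,P_j]$ matters, namely $P_j^\perp\dot P_jP_j$ type terms.

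\textbf{Step 3 (commutator representation).} Write this off-diagonal operator as a commutator with $H_T$:
\[
[\dot P_j,P_j]=[H_T,X_j],\qquad X_j(u)=\frac{1}{2\pi i}\oint_{\Gamma_j}R(z)\,\dot P_j(u)\,R(z)\,dz,
\]
with $R(z)=(H_T-z)^{-1}$ and $\Gamma_j$ a circle of radius $g_j/2$ around $E_j$. Then express $U_{A,j}^*[H_T,X_j]U_T$ as $\frac{i}{T}\frac{d}{du}\big(U_{A,j}^*X_jU_T\big)$ minus terms involving $\dot X_j$ and $[\,[\dot P_j,P_j],X_j]$ (the latter arising because $H_{A,j}\neq TH_T$). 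Integrating by parts then produces the $1/T$ factor, giving
\[
\|(W(s)-I)P_j(0)\|\le \frac1T\Big(\|X_j(s)\|+\|X_j(0)\|+\int_0^s\big(\|\dot X_j\|+2\|[\dot P_j,P_j]\|\,\|X_j\|\big)du\Big).
\]

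\textbf{Step 4 (estimates).} Bound each term with $\|R(z)\|\le 2/g_j$ on $\Gamma_j$. Since $\dot P_j=-\frac{1}{2\pi i}\oint R\dot H_TR\,dz$, we get $\|\dot P_j\|\lesssim\|\dot H_T\|/g_j$. More sharply, $X_j$ can be written via the reduced resolvent so that $\|X_j\|\le\|\dot H_T\|/g_j^2$, which yields the boundary terms. Differentiating the contour formula gives $\|\dot X_j\|\le \|\ddot H_T\|/g_j^2+c\,\|\dot H_T\|^2/g_j^3$, and the product term contributes another multiple of $\|\dot H_T\|^2/g_j^3$. Collecting constants yields the stated coefficient $10$.

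\textbf{Main obstacle.} The hard part is the bookkeeping of constants in Step 4. Obtaining exactly $g_j^{-2}$ (not $4g_j^{-2}$ from the crude contour length-times-resolvent bound) requires the spectral-theorem representation
\[
X_j=\sum_{k\neq j}(E_k-E_j)^{-1}\big(P_k\dot P_jP_j-P_j\dot P_jP_k\big)
\]
together with orthogonality. It also requires tracking the derivative of $g_j$-dependent contours; I would instead differentiate the spectral-sum form using $\dot E_k=\operatorname{tr}(P_k\dot H_T)/m_k$.
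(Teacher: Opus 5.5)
Your scheme is the standard resolvent--projection argument: the wave operator $W=U_{A,j}^*U_T$, Duhamel, writing $[\dot P_j,P_j]$ as a commutator with $H_T$, then integrating by parts to gain $1/T$. The paper gives no proof of its own, only the citation to Scherer's Theorem~G.15, which uses this method. The gap is Step~4, which is where the statement goes beyond Kato's $O(1/T)$. The coefficient $10$ is asserted, not derived, and it does not simply fall out. Bounding the terms of $\dot X_j$ one at a time gives roughly $18\|\dot H_T\|^2/g_j^3$ (for instance with the triangle-inequality bound $\|\dot S_j\|\le 4\|\dot H_T\|/g_j^2$ for the reduced resolvent below).

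More seriously, your proposed route is to differentiate the sum over $k\neq j$ term by term. That produces $\dot P_k$ for $k\neq j$, whose natural bounds $\|\dot P_k\|\le\|\dot H_T\|/g_k$ involve the gaps around the \emph{other} eigenvalues, plus a sum over all levels. The claimed bound depends on $g_j$ alone, so these contributions must cancel, and a term-by-term estimate does not show that.

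The fix is to keep the sum packaged as the reduced resolvent $S_j:=\sum_{k\neq j}(E_j-E_k)^{-1}P_k$, with $\|S_j\|\le 1/g_j$. Write $D=\dot H_T$, $P=P_j$, $Q=I-P$. The identity $P_k\dot P_jP_j=(E_j-E_k)^{-1}P_kDP_j$ gives $\dot P_j=S_jDP+PDS_j$ and $X_j=S_j^2DP+PDS_j^2$; this is your contour integral, and it satisfies $[X_j,H_T]=[\dot P_j,P_j]$. Differentiating $S_j(E_j-H_T)=Q$ and $S_jP=0$ gives
\[
\dot S_j=-PDS_j^2-S_j^2DP+S_jDS_j-\dot E_jS_j^2 ,
\]
which involves no other gap. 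Now use $PDP=\dot E_jP$ and $|\dot E_j|\le\|D\|$. The part of $\dot X_j$ without $\ddot H_T$ has $PP$-block $-2PDS_j^3DP$, $QP$-block $S_jDS_j^2DP+2S_j^2DS_jDP-3\dot E_jS_j^3DP$ (the $PQ$-block is its adjoint), and $QQ$-block $S_j^2DPDS_j+S_jDPDS_j^2$. Its norm is therefore at most $\bigl\|\bigl(\begin{smallmatrix}2&6\\6&2\end{smallmatrix}\bigr)\bigr\|\,\|D\|^2/g_j^3=8\|D\|^2/g_j^3$. The $\ddot H_T$-part $S_j^2\ddot H_TP+P\ddot H_TS_j^2$ is off-diagonal with norm $\le\|\ddot H_T\|/g_j^2$. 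Adding $\|[\dot P_j,P_j]X_j\|\le\|D\|^2/g_j^3$ gives coefficient $9\le 10$, or exactly $10$ with your factor $2$.

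Along the way, fix the following; none of them affect the norms.
\begin{enumerate}
\item With $R=(H_T-z)^{-1}$ one has $[H_T,X_j]=-[\dot P_j,P_j]$, not $+[\dot P_j,P_j]$.
\item The spectral sum you wrote solves $[H_T,X]=\dot P_j$; it is not the contour integral, which equals $\sum_{k\neq j}(E_j-E_k)^{-1}(P_k\dot P_jP_j+P_j\dot P_jP_k)$.
\item The correction from $H_{A,j}\neq TH_T$ is the one-sided product in $\frac{d}{du}(U_{A,j}^*X_jU_T)=U_{A,j}^*(\dot X_j+iT[H_T,X_j]-[\dot P_j,P_j]X_j)U_T$, not a commutator.
\item Your remark that Step~1 reduces matters to off-diagonal pieces on $P_j(0)$ is spurious, since $U_T$ does not intertwine; you never use it.
\end{enumerate}
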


This result provides a quantitative bound on the deviation between the exact
evolution $U_T(s)$ and the adiabatic evolution $U_{A,j}(s)$. In particular, for
any initial state $|\psi\rangle \in \operatorname{Ran} P_j(0)$, the state $U_T(s)|\psi\rangle$
remains close in norm to $U_{A,j}(s)|\psi\rangle \in \operatorname{Ran} P_j(s)$,
and hence to the instantaneous eigenspace $\operatorname{Ran} P_j(s)$, with an error bounded by \(C_j(s)/T\).

\subsection{Non-Hermitian Schr\"odinger dynamics}

We consider a class of quantum systems governed by non-Hermitian Hamiltonians.
Let $\hat H_T:[0,1]\to B(\mathbb{H})$ be a norm-continuous operator-valued map. Throughout the paper, the spectral projection of \(\hat H_T(s)\) associated with the eigenvalue \(\lambda_j(s)\) is denoted by \(\hat P_j(s)\). In order to retain a physically meaningful instantaneous spectral interpretation in the
original non-Hermitian representation, we impose the following spectral admissibility
assumption.

\begin{assumption}[Non-Hermitian adiabatic assumption]
\label{Ass:NonHermitianAdiabatic}
We assume that,
\[
\hat H_T:[0,1]\longrightarrow B(\mathbb H)
\]
satisfies the following conditions:
\begin{enumerate}
\item \(\hat H_T\) is twice continuously differentiable with respect to \(s\),

\item for every \(s\in[0,1]\), the operator \(\hat H_T(s)\) is diagonalizable and has real spectrum, $\sigma(\hat H_T(s))
=
\{\lambda_j(s)\mid j\in I\},$

\item the eigenvalues \(\lambda_j(s)\) have constant algebraic multiplicities on \([0,1]\),

\item the distinct eigenvalues do not cross. For every fixed \(j\), define $\gamma_j(s)
:=
\min_{k\neq j}
|\lambda_k(s)-\lambda_j(s)|.$
Then, $\gamma_j(s)>0$, $s\in[0,1].$

\end{enumerate}
\end{assumption}

The time-rescaled non-Hermitian evolution is defined by the Schr\"odinger-type equation,
\begin{equation}\label{2.5}
\frac{i}{T}\,\frac{d}{ds}|\psi(s)\rangle = \hat H_T(s)|\psi(s)\rangle, \qquad s \in [0,1],
\end{equation}
with initial condition $|\psi(0)\rangle \in \mathbb{H}$.

In contrast to the standard Schr\"odinger equation, $\hat H_T(s)$ is not assumed to be
Hermitian, and therefore the evolution generated by \eqref{2.5} is not unitary with
respect to the standard inner product on $\mathbb{H}$.

As a consequence, standard spectral and adiabatic methods cannot be applied directly.
In the following sections, we construct a time-dependent Dyson map that allows us to
transform this evolution into a corresponding Hermitian representation.

\subsection{Existence of the time-dependent Dyson map}

We now establish the existence of a time-dependent Dyson map by first constructing the
corresponding metric operator. Since $\hat H_T:[0,1]\to B(\mathbb{H})$ is norm-continuous, the operator-valued map
$s\mapsto \hat H_T(s)$ is bounded and continuous on $[0,1]$.

Let $\hat U_T(s)$ denote the evolution operator associated with $\hat H_T(s)$. Then $\hat U_T(s)$ satisfies,
\begin{equation}\label{2.12a}
 \frac{i}{T}\,\frac{d}{ds}\hat U_T(s)=\hat H_T(s)\,\hat U_T(s),
 \qquad
 \hat U_T(0)=I .
\end{equation}
Since $\hat H_T(s)$ is bounded and continuous, standard results for linear operator
differential equations guarantee that the evolution operator $\hat U_T(s)$ exists,
is unique, and is invertible for all $s\in[0,1]$.

\begin{theorem}\label{thm:DysonExistence}
Let
$\hat\Theta_T(0)\in B(\mathbb{H})$ be bounded, Hermitian, and strictly positive. Define,

\begin{equation}\label{2.12b}
 \hat\Theta_T(s)
 :=
 \big(\hat U_T(s)^{-1}\big)^{*}\,\hat\Theta_T(0)\,\hat U_T(s)^{-1}, \qquad s\in[0,1].
\end{equation}
Then:
\begin{enumerate}
\item the map $s\mapsto\hat\Theta_T(s)$ is bounded, continuously differentiable,
Hermitian, and strictly positive for all $s\in[0,1]$.
\item $\hat\Theta_T(s)$ satisfies the relation,
\begin{equation}\label{2.12}
 \hat H_T^{*}(s)\,\hat\Theta_T(s)-\hat\Theta_T(s)\,\hat H_T(s)
 =
 \frac{i}{T}\,\frac{d}{ds}\hat\Theta_T(s)
\end{equation}
\item there exists a unique bounded, Hermitian, positive, invertible, and continuously
differentiable operator-valued map,
\[
\hat\eta_T(s)=\hat\Theta_T(s)^{1/2},
\qquad s\in[0,1],
\]
such that,
\[
\hat\Theta_T(s)=\hat\eta_T(s)^2
=\hat\eta_T^{*}(s)\hat\eta_T(s).
\]
The map $\hat\eta_T(s)$ will be referred to as a time-dependent Dyson map associated with
$\hat H_T(s)$.
\end{enumerate}
\end{theorem}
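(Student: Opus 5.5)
The plan is to work directly from the defining formula \eqref{2.12b} and the evolution equation \eqref{2.12a}, treating the three items in order.

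For item 1, I first note that $\hat U_T$ is continuously differentiable and invertible on $[0,1]$. Its inverse $V(s):=\hat U_T(s)^{-1}$ is then also $C^1$, since inversion is smooth on the open set of invertible operators. Differentiating $V\hat U_T=I$ gives
\[
\frac{d}{ds}V(s)=-V(s)\Big(\frac{d}{ds}\hat U_T(s)\Big)V(s)=iT\,V(s)\hat H_T(s).
\]
Boundedness on $[0,1]$ follows from continuity on a compact interval. Hermiticity is immediate: $(V^*\hat\Theta_T(0)V)^*=V^*\hat\Theta_T(0)^*V$. For strict positivity, I use $\hat\Theta_T(0)\ge c>0$. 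For any $x$,
\[
\langle x,\hat\Theta_T(s)x\rangle=\langle Vx,\hat\Theta_T(0)Vx\rangle\ge c\|Vx\|^2\ge c\,\|\hat U_T(s)\|^{-2}\|x\|^2 .
\]
Since $\sup_s\|\hat U_T(s)\|<\infty$ by continuity, this lower bound is even uniform in $s$. The $C^1$ property follows from the product rule.

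For item 2, I differentiate the product. Taking adjoints of the derivative of $V$ above gives $\frac{d}{ds}V^*=-iT\,\hat H_T^*V^*$. Hence
\[
\frac{d}{ds}\hat\Theta_T=-iT\,\hat H_T^*\hat\Theta_T+iT\,\hat\Theta_T\hat H_T .
\]
Multiplying by $i/T$ yields $\hat H_T^*\hat\Theta_T-\hat\Theta_T\hat H_T=\frac{i}{T}\frac{d}{ds}\hat\Theta_T$, which is \eqref{2.12}. This is a routine computation.

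For item 3, existence and uniqueness of the positive square root of a positive operator is standard; in finite dimensions it comes from the spectral theorem. This root is Hermitian, and it is invertible because $\hat\Theta_T(s)$ is strictly positive. Hence $\hat\eta_T^*\hat\eta_T=\hat\eta_T^2=\hat\Theta_T$.

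The only genuinely nontrivial point is that $s\mapsto\hat\Theta_T(s)^{1/2}$ is continuously differentiable. My first approach is the Riesz--Dunford (holomorphic functional calculus) representation
\[
\hat\eta_T(s)=\frac{1}{2\pi i}\oint_\Gamma z^{1/2}\big(z-\hat\Theta_T(s)\big)^{-1}dz .
\]
Here $\Gamma$ is a fixed contour in the right half-plane enclosing $[c',M]$, where $c'>0$ and $M$ are uniform lower and upper bounds for $\sigma(\hat\Theta_T(s))$ on $[0,1]$; these exist by item 1. The resolvent is $C^1$ in $s$, uniformly for $z\in\Gamma$, with derivative $(z-\hat\Theta_T)^{-1}\dot{\hat\Theta}_T(z-\hat\Theta_T)^{-1}$. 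Differentiation under the integral then gives a continuous derivative.

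As an alternative, I would invoke the implicit function theorem for the map $X\mapsto X^2$ at a positive invertible $X$. Its derivative $Y\mapsto XY+YX$ is invertible, since it is a Sylvester equation and $\sigma(X)\cap\sigma(-X)=\emptyset$. This shows that the square root is a smooth map on positive invertible operators, and composing with the $C^1$ curve $\hat\Theta_T$ gives the result. The main care needed is the uniform spectral bounds ensuring a single contour works for all $s$.
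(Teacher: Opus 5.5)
Your proposal is correct and follows the paper's proof essentially step for step: you differentiate $\hat U_T(s)^{-1}\hat U_T(s)=I$, take adjoints, and apply the product rule to obtain the compatibility relation, and you get Hermiticity, strict positivity and the unique positive square root from the spectral calculus. Your uniform lower bound $c\,\|\hat U_T(s)\|^{-2}$ and your Riesz--Dunford (or inverse-function-theorem) argument for the $C^1$ regularity of $s\mapsto\hat\Theta_T(s)^{1/2}$ fill in details that the paper only asserts by appeal to ``the functional calculus for positive definite operators.''
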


\begin{proof}
Since $\hat U_T(s)$ and $\hat U_T(s)^{-1}$ are bounded operators and
$\hat\Theta_T(0)$ is bounded, it follows from \eqref{2.12b} that
$\hat\Theta_T(s)$ is a bounded operator for every $s\in[0,1]$.

Since,
\[
\frac{i}{T}\,\frac{d}{ds}\hat U_T(s)=\hat H_T(s)\hat U_T(s),
\]
we have,
\[
\frac{d}{ds}\hat U_T(s)=-\,iT\,\hat H_T(s)\hat U_T(s).
\]

Since $\hat H_T(s)$ is norm-continuous, standard results for linear operator
differential equations imply that the evolution operator $\hat U_T(s)$ is continuously
differentiable on [0,1]. Since $\hat U_T(s)$ is invertible for every
$s\in[0,1]$, it follows from the inverse mapping theorem for Banach spaces that
$\hat U_T(s)^{-1}$ is also continuously differentiable.

Differentiating the identity $\hat U_T(s)^{-1}\hat U_T(s)=I$, we obtain,
\[
\frac{d}{ds}\hat U_T(s)^{-1}
=
-\hat U_T(s)^{-1}\frac{d}{ds}\hat U_T(s)\hat U_T(s)^{-1}
=
iT\,\hat U_T(s)^{-1}\hat H_T(s),
\]
and therefore,
\[
\frac{d}{ds}\big(\hat U_T(s)^{-1}\big)^*
=
-\,iT\,\hat H_T^*(s)\big(\hat U_T(s)^{-1}\big)^*.
\]
Consequently, the map $s\mapsto\hat\Theta_T(s)$ defined by \eqref{2.12b}
is continuously differentiable.

Now differentiate \eqref{2.12b}:
\[
\frac{d}{ds}\hat\Theta_T(s)
=
\frac{d}{ds}\big(\hat U_T(s)^{-1}\big)^*\,\hat\Theta_T(0)\,\hat U_T(s)^{-1}
+
\big(\hat U_T(s)^{-1}\big)^*\,\hat\Theta_T(0)\,\frac{d}{ds}\hat U_T(s)^{-1}.
\]
Substituting the above expressions yields,
\[
\frac{d}{ds}\hat\Theta_T(s)
=
-\,iT\,\hat H_T^*(s)\hat\Theta_T(s)
+
iT\,\hat\Theta_T(s)\hat H_T(s),
\]
which is equivalent to \eqref{2.12}.

The operator $\hat\Theta_T(s)$ is Hermitian because
$\hat\Theta_T(0)$ is Hermitian. Moreover, for any nonzero
$x\in\mathbb{H}$,
\[
\langle x,\hat\Theta_T(s)x\rangle
=
\langle \hat U_T(s)^{-1}x,\hat\Theta_T(0)\hat U_T(s)^{-1}x\rangle.
\]
Since $\hat U_T(s)$ is invertible and $\hat\Theta_T(0)$ is strictly positive,
it follows that $\langle x,\hat\Theta_T(s)x\rangle>0$.

Finally, since \(\hat\Theta_T(s)\) is bounded, Hermitian, and strictly positive,
the functional calculus for bounded self-adjoint operators yields a unique bounded,
Hermitian, positive, and invertible square root,
\[
\hat\eta_T(s)=\hat\Theta_T(s)^{1/2}.
\]
Moreover,
\[
\hat\Theta_T(s)
=
\hat\eta_T(s)^2
=
\hat\eta_T^{*}(s)\hat\eta_T(s).
\]
Since \(s\mapsto\hat\Theta_T(s)\) is continuously differentiable and
\(\hat\Theta_T(s)\) remains strictly positive on the compact interval \([0,1]\), the
functional calculus for positive definite operators implies that
\(s\mapsto\hat\Theta_T(s)^{1/2}\) is continuously differentiable.

This completes the proof.
\end{proof}

The metric operator $\hat\Theta_T(s)$ is not uniquely determined by the dynamics alone;
different strictly positive initial choices $\hat\Theta_T(0)$ produce different, but
equivalent representations. Therefore, Theorem~\ref{thm:DysonExistence} provides a
time-dependent Dyson map $\hat\eta_T(s)$ associated with the non-Hermitian evolution.
We now use this map to define the corresponding transformed Hamiltonian.

\subsection{Dyson transform and Hermitian counterpart}

We now describe the relation between the original non-Hermitian dynamics and the
corresponding Dyson-transformed dynamics.

\begin{definition}[Dyson transform]
Let $\hat H_T : [0,1] \to B(\mathbb{H})$ generate the non-Hermitian evolution \eqref{2.5}, and let $\hat\eta_T : [0,1]\to B(\mathbb{H})$ be an invertible operator-valued map,
continuously differentiable with respect to $s$, and suppose that the transformed state
vectors are related by,
\begin{equation}\label{2.10}
 |\phi(s)\rangle=\hat\eta_T(s)\,|\psi(s)\rangle.
\end{equation}
Then the operator $\bar H_T(s)$ defined through,
\begin{equation}\label{2.9}
 \frac{i}{T}\,\frac{d}{ds}|\phi(s)\rangle \;=\; \bar H_T(s)\,|\phi(s)\rangle
\end{equation}
is called the \emph{Dyson transform} of $\hat H_T(s)$ associated with the map
$\hat\eta_T(s)$.
\end{definition}

Substituting \eqref{2.10} into \eqref{2.9} and using \eqref{2.5},
one obtains the equivalent representation,
\begin{equation}\label{2.11}
 \bar H_T(s)
 =
 \hat\eta_T(s)\,\hat H_T(s)\,\hat\eta_T^{-1}(s)
 +
 \frac{i}{T}\,\frac{d\hat\eta_T(s)}{ds}\,\hat\eta_T^{-1}(s).
\end{equation}
Equation~\eqref{2.11} is the time-rescaled Dyson relation and it provides an equivalent characterization of the Dyson transform.

\begin{proposition}\label{prop:Hbar-Hermitian}
Let $\hat\eta_T(s)=\hat\Theta_T(s)^{1/2}$ be the unique bounded, Hermitian, positive, and invertible Dyson map obtained in
Theorem~\ref{thm:DysonExistence}, and let $\bar H_T(s)$ be defined by the Dyson relation
\eqref{2.11}. Then $\bar H_T(s)$ is Hermitian for every $s\in[0,1]$.
\end{proposition}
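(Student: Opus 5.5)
We prove Hermiticity by computing $\bar H_T(s)-\bar H_T(s)^*$ directly from the Dyson relation \eqref{2.11} and showing that it vanishes, using the metric equation \eqref{2.12} from Theorem~\ref{thm:DysonExistence}. Write $\eta=\hat\eta_T(s)$, $H=\hat H_T(s)$, $\Theta=\hat\Theta_T(s)=\eta^2$, and use a dot for $d/ds$. Since $\eta$ is Hermitian and invertible, $\eta^{-1}$ is Hermitian. Since $s\mapsto\eta$ is a $C^1$ family of Hermitian operators, $\dot\eta$ is Hermitian too.

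The first step is to take adjoints in \eqref{2.11}:
\[
\bar H_T^* = \eta^{-1}H^*\eta - \frac{i}{T}\,\eta^{-1}\dot\eta .
\]
Hermiticity of $\bar H_T$ is equivalent to the vanishing of $\eta\,(\bar H_T-\bar H_T^*)\,\eta$, because $\eta$ is invertible. Multiplying out gives
\[
\eta(\bar H_T-\bar H_T^*)\eta
= \Theta H - H^*\Theta + \frac{i}{T}\big(\eta\dot\eta+\dot\eta\eta\big).
\]

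The second step is to differentiate $\Theta=\eta^2$ by the product rule, which is valid by part 3 of Theorem~\ref{thm:DysonExistence}. This gives $\dot\Theta=\dot\eta\eta+\eta\dot\eta$, with no commutativity needed. Substituting,
\[
\eta(\bar H_T-\bar H_T^*)\eta = \Theta H - H^*\Theta + \frac{i}{T}\dot\Theta .
\]
By \eqref{2.12}, $\frac{i}{T}\dot\Theta = H^*\Theta-\Theta H$, so the right-hand side is zero. Hence $\bar H_T=\bar H_T^*$ for every $s\in[0,1]$.

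The only delicate point is the ordering in the derivative term. We never need $[\eta,\dot\eta]=0$, which generally fails. The symmetric combination $\eta\dot\eta+\dot\eta\eta$ appears automatically after conjugating by $\eta$ on both sides, and this is exactly why we sandwich by $\eta$ rather than compare $\bar H_T$ and $\bar H_T^*$ directly. The remaining inputs — that $\eta^{-1}$ and $\dot\eta$ are Hermitian and that $(\eta^{-1}\dot\eta)^*=\dot\eta\eta^{-1}$ — are routine.
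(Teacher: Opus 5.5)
Your proof is correct and follows essentially the same route as the paper: sandwich $\bar H_T-\bar H_T^*$ by the Dyson map, identify the derivative terms as $\dot{\hat\Theta}_T=\dot{\hat\eta}_T\hat\eta_T+\hat\eta_T\dot{\hat\eta}_T$ by the product rule, cancel using \eqref{2.12}, and conclude by invertibility of $\hat\eta_T(s)$. The only cosmetic difference is that you use the Hermiticity of $\hat\eta_T$ and $\dot{\hat\eta}_T$ to sandwich by $\hat\eta_T$ on both sides, whereas the paper multiplies by $\hat\eta_T^*$ on the left and $\hat\eta_T$ on the right, so its computation would also go through for any invertible $C^1$ factorization $\hat\Theta_T=\hat\eta_T^*\hat\eta_T$.
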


\begin{proof}
Taking the adjoint of \eqref{2.11}, we obtain,
\[
\bar H_T(s)^*
=
\bigl(\hat\eta_T^{-1}(s)\bigr)^*\,\hat H_T(s)^*\,\hat\eta_T(s)^*
-\frac{i}{T}\,\bigl(\hat\eta_T^{-1}(s)\bigr)^*
\frac{d\hat\eta_T(s)^*}{ds}.
\]
Multiplying this identity on the left by $\hat\eta_T(s)^*$ and on the right by
$\hat\eta_T(s)$ yields,
\begin{equation}\label{2.11a}
\hat\eta_T(s)^*\,\bar H_T(s)^*\,\hat\eta_T(s)
=
\hat H_T(s)^*\,\hat\Theta_T(s)
-\frac{i}{T}\,\frac{d\hat\eta_T(s)^*}{ds}\hat\eta_T(s).
\end{equation}
On the other hand, multiplying \eqref{2.11} on the left by $\hat\eta_T(s)^*$ and on the
right by $\hat\eta_T(s)$ gives,
\begin{equation}\label{2.11b}
\hat\eta_T(s)^*\,\bar H_T(s)\,\hat\eta_T(s)
=
\hat\Theta_T(s)\,\hat H_T(s)
+\frac{i}{T}\,\hat\eta_T(s)^*\frac{d\hat\eta_T(s)}{ds}.
\end{equation}
Subtracting \eqref{2.11b} from \eqref{2.11a}, we find,
\[
\hat\eta_T(s)^*\bigl(\bar H_T(s)^*-\bar H_T(s)\bigr)\hat\eta_T(s)
=
\hat H_T(s)^*\,\hat\Theta_T(s)-\hat\Theta_T(s)\,\hat H_T(s)
-\frac{i}{T}
\left(
\frac{d\hat\eta_T(s)^*}{ds}\hat\eta_T(s)
+
\hat\eta_T(s)^*\frac{d\hat\eta_T(s)}{ds}
\right).
\]
Since $\hat\Theta_T(s)=\hat\eta_T(s)^*\hat\eta_T(s)$, we have,
\[
\frac{d}{ds}\hat\Theta_T(s)
=
\frac{d\hat\eta_T(s)^*}{ds}\hat\eta_T(s)
+
\hat\eta_T(s)^*\frac{d\hat\eta_T(s)}{ds}.
\]
Therefore,
\[
\hat\eta_T(s)^*\bigl(\bar H_T(s)^*-\bar H_T(s)\bigr)\hat\eta_T(s)
=
\hat H_T(s)^*\,\hat\Theta_T(s)-\hat\Theta_T(s)\,\hat H_T(s)
-\frac{i}{T}\,\frac{d}{ds}\hat\Theta_T(s).
\]
By \eqref{2.12}, the right-hand side is zero. Hence,
\[
\hat\eta_T(s)^*\bigl(\bar H_T(s)^*-\bar H_T(s)\bigr)\hat\eta_T(s)=0.
\]
Since $\hat\eta_T(s)$ is invertible, it follows that,
\[
\bar H_T(s)^*-\bar H_T(s)=0.
\]
Thus $\bar H_T(s)$ is Hermitian.
\end{proof}

\begin{definition}[Hermitian counterpart]\label{def:HermitianCounterpart}
The Dyson transform $\bar H_T(s)$ associated with the
Dyson map $\hat\eta_T(s)$ is called the Hermitian counterpart of $\hat H_T(s)$ if, in addition, it satisfies Assumption~\ref{AA}.
\end{definition}

\begin{definition}[Dyson correction term]
\label{Def:DysonCorrection}
Let \(\hat\eta_T(s)\) be the time-dependent Dyson map associated with
\(\hat H_T(s)\). Define,
\[
K_T(s):=\dot{\hat\eta}_T(s)\hat\eta_T(s)^{-1}
\]
and,
\[
B_T(s):=
\frac{i}{T}\dot{\hat\eta}_T(s)\hat\eta_T(s)^{-1}
=
\frac{i}{T}K_T(s).
\]
For each fixed \(T\), define,
\[
b_T(s):=
\left\|
\dot{\hat\eta}_T(s)\hat\eta_T(s)^{-1}
\right\|
=
\|K_T(s)\|.
\]
Then,
\[
\|B_T(s)\|
=
\frac{b_T(s)}{T}.
\]
Since \(b_T\) is continuous on \([0,1]\), define,
\[
\beta_{T}:=\max_{s\in[0,1]}b_T(s).
\]
Hence,
\[
\|B_T(s)\|\le \frac{\beta_{T}}{T},
\qquad s\in[0,1].
\]
\end{definition}

The metric relation
\(
\hat\Theta_T(s)=\hat\eta_T^{*}(s)\hat\eta_T(s)
\)
induces a natural modified inner product on \(\mathbb H\). This inner product
will be used to interpret the original non-Hermitian evolution as a
metric-preserving evolution.

\subsection{Modified inner product and the Hilbert space $\mathbb{H}_{\Theta_T(s)}$}

The Dyson map $\hat\eta_T(s)$ induces a modified inner product on the underlying
vector space $\mathbb{H}$. For two states $|\psi(s)\rangle,|\tilde\psi(s)\rangle\in\mathbb{H}$,
we define,
\begin{equation}\label{2.13}
 \langle \psi(s),\tilde\psi(s)\rangle_{\Theta_T(s)}
 :=
 \langle \psi(s),\hat\Theta_T(s)\tilde\psi(s)\rangle ,
 \qquad
 \hat\Theta_T(s)=\hat\eta_T^{*}(s)\hat\eta_T(s).
\end{equation}
\noindent Equivalently, introducing the transformed states,
\begin{equation}\label{2.14}
 |\phi(s)\rangle = \hat\eta_T(s)\,|\psi(s)\rangle,
 \qquad
 |\tilde\phi(s)\rangle = \hat\eta_T(s)\,|\tilde\psi(s)\rangle ,
\end{equation}
the modified inner product can be written as,
\[
\langle \psi(s),\tilde\psi(s)\rangle_{\Theta_T(s)}
=
\langle \phi(s),\tilde\phi(s)\rangle .
\]

The inner product \eqref{2.13} induces a Hilbert space structure on the same
vector space $\mathbb{H}$. We denote the Hilbert space
obtained by equipping $\mathbb{H}$ with the inner product
$\langle\cdot,\cdot\rangle_{\Theta_T(s)}$ by $\mathbb{H}_{\Theta_T(s)}$. Since the metric operator
$\hat\Theta_T(s)$ is bounded, Hermitian, and strictly positive,
this inner product is equivalent to the standard inner product on $\mathbb{H}$.
Hence $\mathbb{H}_{\Theta_T(s)}$ and $\mathbb{H}$ coincide as vector spaces,
but differ in their inner product structure.

By Proposition~\ref{prop:Hbar-Hermitian}, the Dyson transform $\bar H_T(s)$ associated with the metric
constructed in Theorem~\ref{thm:DysonExistence} is Hermitian. Therefore, the corresponding transformed evolution is unitary in the Dyson-transformed
representation with respect to the standard inner product. Equivalently, the original non-Hermitian evolution preserves the metric inner product
along the evolution, in the sense that
\(
\langle \hat U_T(s)\psi,\hat U_T(s)\varphi\rangle_{\Theta_T(s)}
=
\langle \psi,\varphi\rangle_{\Theta_T(0)}.
\)

Thus the Dyson transformation identifies the original metric-preserving evolution with
a unitary evolution in the transformed representation. The quantitative adiabatic
analysis of this transformed evolution is carried out in the next section.

\subsection{Dyson decomposition and admissibility as a Hermitian counterpart}

We now record the decomposition of the Dyson-transformed Hamiltonian that will
be used in the comparison of spectral projections. Define,
\[
A_T(s):=\hat\eta_T(s)\hat H_T(s)\hat\eta_T(s)^{-1},
\qquad
B_T(s):=\frac{i}{T}\dot{\hat\eta}_T(s)\hat\eta_T(s)^{-1}.
\]
Then the Dyson relation can be written as,
\[
\bar H_T(s)=A_T(s)+B_T(s).
\]
Since \(A_T(s)\) is similar to \(\hat H_T(s)\), the operators \(A_T(s)\) and
\(\hat H_T(s)\) have the same spectrum:
\[
\sigma(A_T(s))=\sigma(\hat H_T(s)).
\]
Moreover, if \(\hat P_j(s)\) is the spectral projection of \(\hat H_T(s)\), then
the corresponding spectral projection of \(A_T(s)\) is,
\[
Q_j(s):=\hat\eta_T(s)\hat P_j(s)\hat\eta_T(s)^{-1}.
\]
\begin{assumption}[Contour-resolvent condition]
\label{Ass:ContourResolvent}
For each fixed \(j\), assume that there exists a positively oriented closed
contour \(\Gamma_j(s)\)
enclosing the eigenvalue \(\lambda_j(s)\) of \(A_T(s)\) and no other
point of \(\sigma(A_T(s))\). Define,
\[
M_j(s):=
\sup_{z\in\Gamma_j(s)}
\|(zI-A_T(s))^{-1}\|.
\]
Assume further that,
\[
M_j(s)\|B_T(s)\|<1,
\qquad s\in[0,1].
\]
\end{assumption}

\begin{proposition}[Criterion for the Hermitian counterpart]
\label{prop:CriterionHermitianCounterpart}
Let \(\hat H_T(s)\) satisfy Assumption~\ref{Ass:NonHermitianAdiabatic}, and let
\(\bar H_T(s)\) be the Dyson transform defined by \eqref{2.11}. Assume that the
contour-resolvent condition in Assumption~\ref{Ass:ContourResolvent} holds.
Then \(\bar H_T(s)\) satisfies Assumption~\ref{AA}. Hence \(\bar H_T(s)\) is the
Hermitian counterpart of \(\hat H_T(s)\) in the sense of
Definition~\ref{def:HermitianCounterpart}.
\end{proposition}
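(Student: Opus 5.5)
We need to verify the three items of Assumption~\ref{AA} for $\bar H_T(s)=A_T(s)+B_T(s)$, which is Hermitian by Proposition~\ref{prop:Hbar-Hermitian}. The main tool will be the Riesz projection along the contour $\Gamma_j(s)$, combined with a Neumann-series perturbation argument.

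\emph{Step 1: resolvent on the contour.} For $z\in\Gamma_j(s)$ we factor
\[
zI-\bar H_T(s)=\bigl(zI-A_T(s)\bigr)\bigl(I-(zI-A_T(s))^{-1}B_T(s)\bigr).
\]
Since $\|(zI-A_T(s))^{-1}B_T(s)\|\le M_j(s)\|B_T(s)\|<1$, the second factor is invertible by a Neumann series. Hence $\Gamma_j(s)\cap\sigma(\bar H_T(s))=\emptyset$ and
\[
\|(zI-\bar H_T(s))^{-1}\|\le \frac{M_j(s)}{1-M_j(s)\|B_T(s)\|}.
\]
We can then define the Riesz projections
\[
\bar P_j(s)=\frac{1}{2\pi i}\oint_{\Gamma_j(s)}(zI-\bar H_T(s))^{-1}\,dz,
\qquad
Q_j(s)=\frac{1}{2\pi i}\oint_{\Gamma_j(s)}(zI-A_T(s))^{-1}\,dz .
\]

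\emph{Step 2: constant multiplicities.} We use the homotopy $A_T(s)+\tau B_T(s)$ for $\tau\in[0,1]$. The Neumann argument applies uniformly in $\tau$, since $\tau M_j\|B_T\|<1$. Therefore $\tau\mapsto$ (Riesz projection) is norm continuous. Projections that are norm-continuous in a parameter have constant rank, so
\[
\operatorname{rank}\bar P_j(s)=\operatorname{rank}Q_j(s)=\operatorname{rank}\hat P_j(s)=m_j,
\]
where $m_j$ is the algebraic multiplicity of $\lambda_j$. By Assumption~\ref{Ass:NonHermitianAdiabatic}(3), $m_j$ is independent of $s$.

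Because $\bar H_T(s)$ is Hermitian, the part of its spectrum inside $\Gamma_j(s)$ consists of real eigenvalues of total multiplicity $m_j$.

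\emph{Main obstacle.} To obtain a single eigenvalue $E_j(s)$ of multiplicity $m_j$, rather than a cluster, we need more than rank counting. We plan two moves.

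First, the contours $\Gamma_j(s)$ for different $j$ enclose disjoint sets, so the clusters are separated. Summing ranks over $j$ exhausts $\dim\mathbb H$, so there is no spectrum of $\bar H_T(s)$ outside $\bigcup_j\Gamma_j(s)$.

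Second, we identify each cluster with a single eigenvalue. The plan is to regard the eigenvalues of $\bar H_T$ as the distinct eigenvalues $E_j(s)$ in the sense of Assumption~\ref{AA} by grouping, with $P_j(s):=\bar P_j(s)$. If Assumption~\ref{AA} is read literally, the paper's formulation implicitly takes each enclosed cluster to be one eigenvalue. The cleanest route is to show $\bar H_T(s)\bar P_j(s)=E_j(s)\bar P_j(s)$, and we expect this to require an extra hypothesis. We will state this as the weak point and fall back on the cluster interpretation.

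\emph{Step 3: gap.} With the clusters in hand, the gap follows. Distinct clusters lie inside disjoint contours, so
\[
g_j(s)\ge \operatorname{dist}\bigl(\text{interior of }\Gamma_j(s),\text{ interior of }\Gamma_k(s)\bigr)>0 .
\]

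\emph{Step 4: regularity.} We must show that $\bar H_T$ is $C^2$, which requires $\hat\eta_T\in C^3$. We will argue this from the ODE for $\hat\Theta_T$:
\[
\dot{\hat\Theta}_T=-iT\bigl(\hat H_T^*\hat\Theta_T-\hat\Theta_T\hat H_T\bigr).
\]
Since $\hat H_T\in C^2$, bootstrapping gives $\hat\Theta_T\in C^3$. The square root is smooth on positive definite operators, so $\hat\eta_T\in C^3$. Consequently $A_T\in C^2$ and $B_T\in C^2$, and hence $\bar H_T\in C^2$.

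Combining the steps, $\bar H_T(s)$ satisfies Assumption~\ref{AA} and is the Hermitian counterpart in the sense of Definition~\ref{def:HermitianCounterpart}.
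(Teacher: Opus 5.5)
\textbf{Same route as the paper; the weak point you flagged is a real gap, shared by the paper.}

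Your argument follows the paper's own proof step for step. The paper also gets $\bar H_T\in C^2$ from $C^3$ regularity of the metric plus smoothness of the square root; it passes through $\hat U_T\in C^3$ rather than bootstrapping the $\hat\Theta_T$ equation, which is equivalent. It handles the spectral part by the same homotopy $A_T(s)+\tau B_T(s)$. It concludes only that the Riesz projection of $\bar H_T(s)$ inside $\Gamma_j(s)$ has constant rank, and then applies the Hermitian adiabatic theorem ``to this isolated spectral subspace''. That is exactly your cluster interpretation, so the paper does not close the gap you isolated either.

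\textbf{The gap cannot be closed under the stated hypotheses.} Assumption~\ref{Ass:ContourResolvent} says nothing about the spectrum of $\bar H_T(s)$ inside $\Gamma_j(s)$ when $\lambda_j$ is degenerate. Here is a counterexample on $\mathbb C^3$. Let $\mu<\nu$, $\Delta=\nu-\mu$, $\kappa(s)=\tfrac14\cos(\pi s)$ and $E=e_1e_3^{T}-e_3e_1^{T}$. Set $A(s)=\mu I+\Delta\,e_3\bigl(e_3+\kappa(s)e_1\bigr)^{T}$ and $\eta(s)=\exp(-i\,r(s)E)$ with $r(s)=\tfrac{T\Delta}{2}\int_0^s\kappa$. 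Define $\hat H_T(s):=\eta(s)^{-1}A(s)\eta(s)$.

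Then $\hat H_T$ is smooth and diagonalizable with spectrum $\{\mu,\mu,\nu\}$ for all $s$. Its Dyson transform with respect to $\eta$ is
\[
A+\tfrac{\Delta\kappa}{2}E=\mu I+\Delta\,e_3e_3^{T}+\tfrac{\Delta\kappa(s)}{2}\bigl(e_1e_3^{T}+e_3e_1^{T}\bigr),
\]
which is Hermitian. By the identity in the proof of Proposition~\ref{prop:Hbar-Hermitian}, $\eta^2$ therefore solves \eqref{2.12}. With $\hat\Theta_T(0)=I$, uniqueness gives $\hat\eta_T=\eta$ and $B_T=\tfrac{\Delta\kappa}{2}E$.

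Take circles of radius $\Delta/2$ about $\mu$ and $\nu$. Writing $(zI-A)^{-1}=\frac{I-P}{z-\mu}+\frac{P}{z-\nu}$ with $P=e_3(e_3+\kappa e_1)^{T}$ and $\|P\|=\sqrt{1+\kappa^2}$, one gets $M_j\|B_T\|\le|\kappa|\bigl(1+2\sqrt{1+\kappa^2}\bigr)<1$. So every hypothesis of the proposition holds.

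Yet $\sigma(\bar H_T(s))=\{\mu,\ \tfrac{\mu+\nu}{2}\pm\tfrac{\Delta}{2}\sqrt{1+\kappa(s)^2}\}$. For $s\neq\tfrac12$ the $\mu$-cluster consists of two simple eigenvalues, and at $s=\tfrac12$ they merge into a double eigenvalue. This violates Assumption~\ref{AA}(2).

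So the proposition holds literally when every $\lambda_j$ is simple: then $\operatorname{rank}\bar P_j(s)=1$ and your proof is complete. For degenerate $\lambda_j$ it holds only in the cluster sense, with $g_j$ read as the distance from the enclosed spectral set to the rest of $\sigma(\bar H_T(s))$.

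\textbf{Two smaller repairs.}

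First, Assumption~\ref{Ass:ContourResolvent} does not say the interiors of different $\Gamma_j(s)$ are disjoint; in the example the circles touch. You do not need it. Apply your constant-rank argument to the idempotent $P_j(\tau)P_k(\tau)$, the product of two commuting Riesz projections of $A_T(s)+\tau B_T(s)$, which vanishes at $\tau=0$. Hence $\bar P_j(s)\bar P_k(s)=0$, so the clusters are disjoint, and a rank count then gives $\sum_k\bar P_k(s)=I$.

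Second, your Step~3 bound by the distance between interiors can be $0$. Use instead that, for Hermitian $\bar H_T(s)$, $\operatorname{dist}(z,\sigma(\bar H_T(s)))=\|(zI-\bar H_T(s))^{-1}\|^{-1}\ge(1-M_j\|B_T\|)/M_j$ on $\Gamma_j(s)$. Any segment joining an eigenvalue inside $\Gamma_j(s)$ to one outside meets $\Gamma_j(s)$. So the inter-cluster gap is at least $2\bigl(1-M_j(s)\|B_T(s)\|\bigr)/M_j(s)$.
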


\begin{proof}
By Proposition~\ref{prop:Hbar-Hermitian}, the Dyson transform \(\bar H_T(s)\) is
Hermitian for every \(s\in[0,1]\).

We first verify the differentiability condition in Assumption~\ref{AA}. Since
\(\hat H_T\in C^2([0,1],B(\mathbb H))\), standard regularity results for
finite-dimensional linear ordinary differential equations along with the evolution equation,
\[
\frac{d}{ds}\hat U_T(s)=-iT\hat H_T(s)\hat U_T(s)
\]
imply that \(\hat U_T(s)\) is \(C^3\) with respect to \(s\). Since
\(\hat U_T(s)\) is invertible, \(\hat U_T(s)^{-1}\) is also \(C^3\). Therefore,
\[
\hat\Theta_T(s)
=
\bigl(\hat U_T(s)^{-1}\bigr)^*
\hat\Theta_T(0)
\hat U_T(s)^{-1}
\]
is \(C^3\). Since \(\hat\Theta_T(s)\) is strictly positive, the square-root map
is smooth on the positive definite cone in finite dimension. Hence,
\[
\hat\eta_T(s)=\hat\Theta_T(s)^{1/2}
\]
is \(C^3\). It follows that \(\hat\eta_T(s)^{-1}\) is also \(C^3\). Therefore,
\[
\hat\eta_T(s)\hat H_T(s)\hat\eta_T(s)^{-1}
\]
is \(C^2\), and,
\[
\frac{i}{T}\dot{\hat\eta}_T(s)\hat\eta_T(s)^{-1}
\]
is \(C^2\). Hence \(\bar H_T(s)\) is twice continuously differentiable with
respect to \(s\).

It remains to verify the spectral part of Assumption~\ref{AA}. Since
\(
A_T(s)=\hat\eta_T(s)\hat H_T(s)\hat\eta_T(s)^{-1},
\)
the operators \(A_T(s)\) and \(\hat H_T(s)\) are similar. Therefore they have
the same eigenvalues with the same algebraic multiplicities. Since
\(\hat H_T(s)\) satisfies Assumption~\ref{Ass:NonHermitianAdiabatic}, the
eigenvalues of \(A_T(s)\) have constant algebraic multiplicities and do
not cross.

Now write,
\[
\bar H_T(s)=A_T(s)+B_T(s).
\]
By Assumption~\ref{Ass:ContourResolvent},
\[
M_j(s)\|B_T(s)\|<1.
\]
Therefore, \(\Gamma_j(s)\) lies in the resolvent set of
\(A_T(s)+\tau B_T(s)\) for every \(\tau\in[0,1]\). Hence no spectrum crosses the
contour during the perturbation from \(A_T(s)\) to
\(\bar H_T(s)=A_T(s)+B_T(s)\). Consequently, the spectral set of
\(\bar H_T(s)\) enclosed by \(\Gamma_j(s)\) remains isolated and has the same
total algebraic multiplicity as the corresponding spectral set of \(A_T(s)\).

Since \(\bar H_T(s)\) is Hermitian, its spectrum is real and its spectral
projections are orthogonal. The contour-resolvent condition keeps the spectral
set enclosed by \(\Gamma_j(s)\) separated from the rest of the spectrum.
Therefore the corresponding spectral projection of \(\bar H_T(s)\) has constant
rank on \([0,1]\). Hence the Hermitian adiabatic theorem may be applied to this
isolated spectral subspace.

Therefore \(\bar H_T(s)\) satisfies Assumption~\ref{AA}. Hence it is the
Hermitian counterpart of \(\hat H_T(s)\).
\end{proof}

\section{The Adiabatic Theorem in the corresponding Hermitian Representation}
\subsection{Adiabatic framework for the Hermitian Hamiltonian $\bar H_T(s)$}

In this section, the Hermitian adiabatic framework of
Assumption~\ref{AA} is applied to the Dyson-transformed Hamiltonian
\(\bar H_T(s)\). The transformed dynamics is considered on the fixed Hilbert space
\(\mathbb H\) with its standard inner product. All operator norms are taken in
\(B(\mathbb H)\).

For a fixed eigenvalue \(j\), let \(\bar P_j(s)\) denote the orthogonal spectral
projection of \(\bar H_T(s)\), and define,
\begin{equation}
\bar g_j(s)
:=
\min_{k\neq j}|\bar E_k(s)-\bar E_j(s)|.
\end{equation}
Let \(\bar U_T(s)\) solve,
\begin{equation}
i\frac{d}{ds}\bar U_T(s)
=
T\bar H_T(s)\bar U_T(s),
\qquad
\bar U_T(0)=I,
\end{equation}
and let \(\bar U_{A,j}(s)\) solve,
\begin{equation}
i\frac{d}{ds}\bar U_{A,j}(s)
=
\left(
T\bar H_T(s)
+
i[\dot{\bar P}_j(s),\bar P_j(s)]
\right)\bar U_{A,j}(s),
\qquad
\bar U_{A,j}(0)=I.
\end{equation}
The adiabatic evolution satisfies,
\begin{equation}\label{3.1}
    \bar P_j(s)\bar U_{A,j}(s)
=
\bar U_{A,j}(s)\bar P_j(0).
\end{equation}

\begin{theorem}[Adiabatic theorem for $\bar H_T(s)$]\label{Thm:AdiabaticBar}
Let $\bar H_T(s)$ be the Hermitian counterpart of $\hat H_T(s)$ associated with the
Dyson map $\hat\eta_T(s)$, and let $\{\bar P_j(s)\}_{j\in I}$ be the spectral projections
onto the eigenspaces of $\bar H_T(s)$. For a fixed $j\in I$, let $\bar g_j(s)$ be the
corresponding spectral gap function. Moreover, let $\bar U_T(s)$ and $\bar U_{A,j}(s)$ be
the actual and adiabatic evolutions defined previously. Then,
\[
 \big\|
 \big( \bar U_{A,j}(s) - \bar U_T(s) \big)\,\bar P_j(0)
 \big\|_{B(\mathbb{H})}
 \;\le\;
 \frac{\bar C_j(s)}{T},\qquad j\in I, s\in[0,1].
\]
where the function $\bar C_j(s)$ is given explicitly by,
\[
 \bar C_j(s)
 :=
 \frac{\|\dot{\bar H}_T(s)\|}{\bar g_j(s)^2}
 +
 \frac{\|\dot{\bar H}_T(0)\|}{\bar g_j(0)^2}
 +
 \int_0^s
 \left(
   \frac{\|\ddot{\bar H}_T(u)\|}{\bar g_j(u)^2}
   +
   10\,\frac{\|\dot{\bar H}_T(u)\|^2}{\bar g_j(u)^3}
 \right) du .
\]

Here $\|\cdot\|_{B(\mathbb{H})}$ denotes the operator norm on the space of bounded
operators on the fixed underlying vector space $\mathbb{H}$.
\end{theorem}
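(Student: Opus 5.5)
The plan is to obtain Theorem~\ref{Thm:AdiabaticBar} as a direct specialization of Theorem~\ref{Thm:StandardAdiabatic}. The only work is to check that $\bar H_T(s)$ satisfies every hypothesis of that theorem, and that the objects named in the statement coincide with the ones it uses.

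First, I would record that $\bar H_T(s)$ is Hermitian for every $s\in[0,1]$, by Proposition~\ref{prop:Hbar-Hermitian}. Being a Hermitian counterpart in the sense of Definition~\ref{def:HermitianCounterpart}, it also satisfies Assumption~\ref{AA}. Concretely, this gives three facts:
\begin{itemize}
\item the map $s\mapsto \bar H_T(s)$ is $C^2$ into $B_{sa}(\mathbb H)$;
\item its eigenvalues $\bar E_k(s)$ have constant multiplicities;
\item the gap $\bar g_j(s)$ is strictly positive on $[0,1]$.
\end{itemize}
If the reader prefers a checkable sufficient condition, Proposition~\ref{prop:CriterionHermitianCounterpart} supplies these properties from Assumptions~\ref{Ass:NonHermitianAdiabatic} and~\ref{Ass:ContourResolvent}. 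In particular, its proof shows that $\hat\eta_T$ is $C^3$, so $\bar H_T=A_T+B_T$ is $C^2$. Hence $\dot{\bar H}_T$ and $\ddot{\bar H}_T$ exist and are continuous. Because $[0,1]$ is compact, $\bar g_j$ is bounded below by a positive constant, so every integrand in $\bar C_j(s)$ is continuous and the integral is finite.

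Next, I would identify the dynamics. The equations defining $\bar U_T$ and $\bar U_{A,j}$ are exactly the equations defining $U_T$ and $U_{A,j}$ in Theorem~\ref{Thm:StandardAdiabatic}, with $H_T$ replaced by $\bar H_T$ and $P_j$ replaced by $\bar P_j$. The orthogonal spectral projection $\bar P_j(s)$ is $C^2$ in $s$, since it is given by a Riesz integral over a contour that stays in the resolvent set locally uniformly in $s$ thanks to the positive gap. Therefore the adiabatic generator $T\bar H_T+i[\dot{\bar P}_j,\bar P_j]$ is continuous and self-adjoint, and $\bar U_{A,j}$ is well defined and unitary. The intertwining relation \eqref{3.1} is then the standard Kato property, which is already part of the framework of Theorem~\ref{Thm:StandardAdiabatic}.

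Applying Theorem~\ref{Thm:StandardAdiabatic} with $(H_T,P_j,g_j,U_T,U_{A,j})$ replaced by $(\bar H_T,\bar P_j,\bar g_j,\bar U_T,\bar U_{A,j})$ gives the bound with exactly the constant $\bar C_j(s)$. The norm is the operator norm on $B(\mathbb H)$ with the standard inner product, which is the norm used in the statement.

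The main obstacle is a subtle one: $\bar H_T$ itself depends on $T$, through both $\hat U_T$ and the term $B_T$. Theorem~\ref{Thm:StandardAdiabatic} is a bound valid for each fixed $T$ and an arbitrary $C^2$ gapped Hermitian family, so the estimate holds as stated. However, $\bar C_j(s)$ may itself depend on $T$. I would state explicitly that the theorem is a fixed-$T$ inequality and does not assert that $\bar C_j$ is uniform in $T$. This keeps the argument honest and defers any $T$-uniformity to the later comparison estimates.
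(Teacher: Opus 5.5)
Your proposal is correct and follows the paper's proof. Both arguments note that $\bar H_T$ is Hermitian (Proposition~\ref{prop:Hbar-Hermitian}) and satisfies Assumption~\ref{AA} by the definition of Hermitian counterpart, then apply Theorem~\ref{Thm:StandardAdiabatic} verbatim, with finite-dimensionality removing domain issues; your remarks on the regularity of $\bar P_j$ and on the bound being a fixed-$T$ inequality whose constant $\bar C_j$ may depend on $T$ are accurate refinements, not a different route.
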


\begin{proof}
Since \(\bar H_T(s)\) is Hermitian and satisfies Assumption~\ref{AA}, the spectral projections $\bar P_j(s)$ are
well defined, the corresponding gap functions $\bar g_j(s)$ are strictly positive, and the
adiabatic generator $\bar H_{A,j}(s)$ is defined exactly as in the standard Hermitian
theory.

Since the underlying Hilbert space $\mathbb{H}$ is finite-dimensional, all operators are
bounded and all operator norms are taken on the fixed vector space $\mathbb{H}$. Hence the
estimates appearing in Theorem~G.15 of~\cite{Scherer2019} apply to $\bar H_T(s)$ without
further domain considerations. Consequently,
\[
\left\|(\bar U_{A,j}(s)-\bar U_T(s))\bar P_j(0)\right\|_{B(\mathbb{H})}
\le \frac{\bar C_j(s)}{T},
\]
with $\bar C_j(s)$ given by the same expression as in the standard Hermitian case. This
proves Theorem~\ref{Thm:AdiabaticBar}.
\end{proof}

\subsection{Dyson pullback of the Hermitian adiabatic evolution}

The adiabatic theorem established in Theorem~\ref{Thm:AdiabaticBar} applies to the Hermitian
Hamiltonian $\bar H_T(s)$ and its adiabatic evolution $\bar U_{A,j}(s)$. In order to relate
this adiabatic evolution to the original non-Hermitian dynamics, we pull it back to the
non-Hermitian representation via the time-dependent Dyson map.

We define the \emph{Dyson pullback} of the Hermitian adiabatic evolution by,
\begin{equation}\label{3.7}
 \widetilde U_{A,j}(s)
 \;:=\;
 \hat\eta_T^{-1}(s)\,\bar U_{A,j}(s)\,\hat\eta_T(0),
 \qquad s\in[0,1].
\end{equation}
In particular, $\widetilde U_{A,j}(0)=I$.

We also define the pulled-back instantaneous projection by,
\begin{equation}\label{ip}
    \Pi_j(s):=\hat\eta_T^{-1}(s)\bar P_j(s)\hat\eta_T(s),
\qquad s\in[0,1].
\end{equation}

This projection represents, in the original non-Hermitian representation, the instantaneous
subspace corresponding to the eigenspace $\operatorname{Ran}\bar P_j(s)$ of the Hermitian
counterpart. The pulled-back projection \(\Pi_j(s)\) is bounded for each fixed \(T\) and
\(s\), since it is obtained from the bounded projection \(\bar P_j(s)\) by
conjugation with the bounded invertible Dyson map \(\hat\eta_T(s)\).

\begin{lemma}[Pulled-back adiabatic invariance]
\label{lem:PulledBackIntertwining}
For every $j\in I$ and $s\in[0,1]$, the pulled-back adiabatic evolution satisfies,
\[
\Pi_j(s)\widetilde U_{A,j}(s)=\widetilde U_{A,j}(s)\Pi_j(0).
\]
Consequently, if,
\[
|\psi(0)\rangle\in\operatorname{Ran}\Pi_j(0),
\]
then,
\[
\widetilde U_{A,j}(s)|\psi(0)\rangle\in\operatorname{Ran}\Pi_j(s)
\]
for every $s\in[0,1]$.
\end{lemma}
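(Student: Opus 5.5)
The proof is a direct algebraic computation. We substitute the definitions \eqref{3.7} and \eqref{ip} and reduce the claim to the Hermitian intertwining relation \eqref{3.1}.

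First, compute the left-hand side:
\[
\Pi_j(s)\widetilde U_{A,j}(s)
=\hat\eta_T^{-1}(s)\bar P_j(s)\hat\eta_T(s)\,\hat\eta_T^{-1}(s)\bar U_{A,j}(s)\hat\eta_T(0)
=\hat\eta_T^{-1}(s)\bar P_j(s)\bar U_{A,j}(s)\hat\eta_T(0).
\]
The middle factor $\hat\eta_T(s)\hat\eta_T^{-1}(s)$ cancels because $\hat\eta_T(s)$ is invertible (Theorem~\ref{thm:DysonExistence}). Next, apply \eqref{3.1} to replace $\bar P_j(s)\bar U_{A,j}(s)$ by $\bar U_{A,j}(s)\bar P_j(0)$. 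Then insert $I=\hat\eta_T(0)\hat\eta_T^{-1}(0)$ between $\bar P_j(0)$ and $\hat\eta_T(0)$:
\[
\hat\eta_T^{-1}(s)\bar U_{A,j}(s)\hat\eta_T(0)\,\hat\eta_T^{-1}(0)\bar P_j(0)\hat\eta_T(0)=\widetilde U_{A,j}(s)\Pi_j(0).
\]
This is exactly the right-hand side.

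The consequence is immediate. If $|\psi(0)\rangle\in\operatorname{Ran}\Pi_j(0)$, then $\Pi_j(0)|\psi(0)\rangle=|\psi(0)\rangle$, because $\Pi_j(0)$ is idempotent. It is idempotent since it is similar to the projection $\bar P_j(0)$. Applying the intertwining identity gives
\[
\widetilde U_{A,j}(s)|\psi(0)\rangle=\Pi_j(s)\widetilde U_{A,j}(s)|\psi(0)\rangle\in\operatorname{Ran}\Pi_j(s).
\]

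The only point not purely mechanical is \eqref{3.1} itself, which the paper states without proof. If it needs justification, the standard argument is as follows. Set $X(s):=\bar U_{A,j}(s)^{-1}\bar P_j(s)\bar U_{A,j}(s)$ and differentiate. The generator commutes with $\bar P_j$ up to the term $[[\dot{\bar P}_j,\bar P_j],\bar P_j]$. Since $\bar P_j\dot{\bar P}_j\bar P_j=0$, this term equals $-\dot{\bar P}_j$, so the derivative of $X$ vanishes and $X(s)=\bar P_j(0)$. I expect this verification, rather than the pullback computation, to be the only step that requires any care.
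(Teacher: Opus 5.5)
Your proof is correct and is the same computation as the paper's: you substitute the definitions of $\Pi_j$ and $\widetilde U_{A,j}$, cancel $\hat\eta_T(s)\hat\eta_T^{-1}(s)$, apply \eqref{3.1}, and reinsert $\hat\eta_T(0)\hat\eta_T^{-1}(0)$, with the range statement following from idempotence of $\Pi_j(0)$. Your optional sketch of \eqref{3.1} is also sound, with one sign caveat: the bare double commutator $[[\dot{\bar P}_j,\bar P_j],\bar P_j]$ equals $+\dot{\bar P}_j$ (this uses both $\bar P_j\dot{\bar P}_j\bar P_j=0$ and $\dot{\bar P}_j=\dot{\bar P}_j\bar P_j+\bar P_j\dot{\bar P}_j$), and its contribution to $\dot X$ becomes $-\dot{\bar P}_j$ only after the factor $i\cdot i=-1$ coming from the generator, which then cancels the $+\dot{\bar P}_j$ term.
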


\begin{proof}
Using the definitions of $\Pi_j(s)$ and $\widetilde U_{A,j}(s)$, we have,
\[
\Pi_j(s)\widetilde U_{A,j}(s)
=
\hat\eta_T^{-1}(s)\bar P_j(s)\bar U_{A,j}(s)\hat\eta_T(0).
\]
By ~\eqref{3.1},
\[
\Pi_j(s)\widetilde U_{A,j}(s)
=
\hat\eta_T^{-1}(s)\bar U_{A,j}(s)\bar P_j(0)\hat\eta_T(0).
\]
Since,
\[
\Pi_j(0)=\hat\eta_T^{-1}(0)\bar P_j(0)\hat\eta_T(0),
\]
we obtain,
\[
\Pi_j(s)\widetilde U_{A,j}(s)
=
\widetilde U_{A,j}(s)\Pi_j(0).
\]

This proves the intertwining relation. The final statement follows immediately by applying
both sides to an initial state in $\operatorname{Ran}\Pi_j(0)$.
\end{proof}

The relation
\(
|\phi(s)\rangle=\hat\eta_T(s)|\psi(s)\rangle
\)
implies that the exact evolutions satisfy,
\begin{equation}\label{3.11}
\bar U_T(s)
=
\hat\eta_T(s)\hat U_T(s)\hat\eta_T(0)^{-1}.
\end{equation}
Equivalently,
\begin{equation}\label{3.12}
\hat U_T(s)
=
\hat\eta_T(s)^{-1}\bar U_T(s)\hat\eta_T(0).
\end{equation}

Combining \eqref{3.7} and \eqref{3.12} gives the exact identity,
\begin{equation}\label{3.13}
 \widetilde U_{A,j}(s)-\hat U_T(s)
 =
 \hat\eta_T^{-1}(s)\,\big(\bar U_{A,j}(s)-\bar U_T(s)\big)\,\hat\eta_T(0).
\end{equation}

\section{Main Adiabatic Theorem for the Non-Hermitian Schr\"odinger Equation}

We now formulate the adiabatic theorem for the original non-Hermitian dynamics. The first
result is the Dyson-pulled-back estimate obtained directly from the Hermitian adiabatic
theorem. We then use the comparison between \(\Pi_j(s)\) and \(\hat P_j(s)\) to obtain an
adiabatic theorem formulated in terms of the spectral projection of the original
non-Hermitian Hamiltonian.

\subsection{Relation between pulled-back and non-Hermitian spectral projections}

We now compare the Dyson-pulled-back projection \(\Pi_j(s)\) with the spectral
projection \(\hat P_j(s)\) of the original non-Hermitian Hamiltonian \(\hat H_T(s)\).
Throughout this subsection, we use the notation
\(
A_T(s),B_T(s), Q_j(s), \Gamma_j(s), M_j(s),
\)
introduced in Section~2.

\begin{proposition}[Comparison of spectral projections]
\label{prop:ProjectionComparison}
Let \(\hat P_j(s)\) be the spectral projection of \(\hat H_T(s)\), and let,
\[
\Pi_j(s):=\hat\eta_T(s)^{-1}\bar P_j(s)\hat\eta_T(s)
\]
be the Dyson-pulled-back spectral projection associated with \(\bar H_T(s)\).
Let \(\Gamma_j(s)\) be a positively oriented contour enclosing only the eigenvalue
\(\lambda_j(s)\) of \(A_T(s)\), and define,
\[
M_j(s):=\sup_{z\in\Gamma_j(s)}\|(zI-A_T(s))^{-1}\|.
\]
Assume that,
\[
M_j(s)\|B_T(s)\|<1.
\]
Then,
\[
\|\Pi_j(s)-\hat P_j(s)\|
\le
\|\hat\eta_T(s)^{-1}\|\,\|\hat\eta_T(s)\|\,
\frac{|\Gamma_j(s)|}{2\pi}
\frac{M_j(s)^2\|B_T(s)\|}
{(1-M_j(s)\|B_T(s)\|)}.
\]
\end{proposition}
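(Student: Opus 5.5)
The plan is to work in the Dyson-transformed representation, where the two projections are Riesz projections of $\bar H_T(s)=A_T(s)+B_T(s)$ and of $A_T(s)$, and to pull the difference back through the similarity at the end. Since $\Pi_j(s)=\hat\eta_T(s)^{-1}\bar P_j(s)\hat\eta_T(s)$ and $\hat P_j(s)=\hat\eta_T(s)^{-1}Q_j(s)\hat\eta_T(s)$, we have
\[
\Pi_j(s)-\hat P_j(s)=\hat\eta_T(s)^{-1}\bigl(\bar P_j(s)-Q_j(s)\bigr)\hat\eta_T(s),
\]
so $\|\Pi_j(s)-\hat P_j(s)\|\le\|\hat\eta_T(s)^{-1}\|\,\|\hat\eta_T(s)\|\,\|\bar P_j(s)-Q_j(s)\|$. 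It therefore suffices to show that
\[
\|\bar P_j(s)-Q_j(s)\|\le\frac{|\Gamma_j(s)|}{2\pi}\frac{M_j^2\|B_T\|}{1-M_j\|B_T\|}.
\]

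For the Riesz representation, fix $s$ and write $A=A_T(s)$, $B=B_T(s)$, $R(z)=(zI-A)^{-1}$. For $z\in\Gamma_j(s)$ we have $zI-A-B=(zI-A)(I-R(z)B)$ and $\|R(z)B\|\le M_j\|B\|<1$. A Neumann series then gives that $zI-\bar H_T(s)$ is invertible, with
\[
(zI-\bar H_T)^{-1}=\sum_{n\ge0}(R(z)B)^nR(z).
\]
Hence
\[
(zI-\bar H_T)^{-1}-R(z)=\sum_{n\ge1}(R(z)B)^nR(z),
\]
whose norm is at most $\sum_{n\ge1}M_j^{n+1}\|B\|^n=M_j^2\|B\|/(1-M_j\|B\|)$. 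By definition $Q_j=\frac{1}{2\pi i}\oint_{\Gamma_j}R(z)\,dz$. Integrating the difference over $\Gamma_j(s)$ and using the length estimate yields the desired bound on the difference of the contour integrals.

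The main obstacle is identifying $\frac{1}{2\pi i}\oint_{\Gamma_j(s)}(zI-\bar H_T(s))^{-1}\,dz$ with $\bar P_j(s)$. The Riesz integral gives the spectral projection of $\bar H_T(s)$ for the part of its spectrum enclosed by $\Gamma_j(s)$, and this must coincide with the orthogonal eigenprojection labelled $j$. Here I would invoke the argument of Proposition~\ref{prop:CriterionHermitianCounterpart}. For every $\tau\in[0,1]$, $\Gamma_j(s)$ lies in the resolvent set of $A+\tau B$, by the same Neumann estimate with $\tau B$. The Riesz projection of $A+\tau B$ is then norm-continuous in $\tau$, so its rank is constant and equals $\operatorname{rank}Q_j(s)$, the multiplicity of $\lambda_j(s)$. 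The enclosed spectrum of the Hermitian $\bar H_T(s)$ is therefore the group of eigenvalues continuing $\lambda_j(s)$. I would adopt the labelling convention, implicit in that proposition, that $\bar E_j(s)$ and $\bar P_j(s)$ denote exactly this enclosed spectral set and its Riesz projection, which is orthogonal because $\bar H_T(s)$ is Hermitian. With this identification the two contour integrals are exactly $\bar P_j(s)$ and $Q_j(s)$, and combining the bounds of the previous two paragraphs gives the stated estimate.
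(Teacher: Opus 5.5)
Your proposal is correct and follows the paper's proof essentially step for step. Both arguments conjugate by $\hat\eta_T(s)$ and write $\bar P_j$ and $Q_j$ as Riesz integrals over $\Gamma_j(s)$. Both identify $\bar P_j(s)$ with the projection onto the spectral set enclosed by $\Gamma_j(s)$, using the homotopy $A_T+\tau B_T$; the paper also takes this as a labelling convention. Both then bound the resolvent difference by $M_j^2\|B_T\|/(1-M_j\|B_T\|)$. The only difference is cosmetic: you sum the Neumann series for the difference directly, while the paper uses the second resolvent identity together with a Neumann bound on $\|(zI-\bar H_T(s))^{-1}\|$. These give the same quantity.
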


\begin{proof}
Let \(Q_j(s)\) denote the spectral projection of,
\[
A_T(s)=\hat\eta_T(s)\hat H_T(s)\hat\eta_T(s)^{-1}
\]
associated with the eigenvalue \(\lambda_j(s)\). Since \(A_T(s)\) is similar to
\(\hat H_T(s)\), the corresponding spectral projections satisfy,
\[
Q_j(s)=\hat\eta_T(s)\hat P_j(s)\hat\eta_T(s)^{-1}.
\]

Since \(M_j(s)\|B_T(s)\|<1\), the contour \(\Gamma_j(s)\) lies in the
resolvent set of \(A_T(s)+\tau B_T(s)\) for every \(\tau\in[0,1]\).
Consequently, no spectrum crosses the contour during the perturbation from
\(A_T(s)\) to \(\bar H_T(s)=A_T(s)+B_T(s)\). We denote by \(\bar P_j(s)\) the
spectral projection of \(\bar H_T(s)\) associated with the spectral set enclosed by
\(\Gamma_j(s)\).

The spectral projection of \(\bar H_T(s)=A_T(s)+B_T(s)\) is given by the formula,
\[
\bar P_j(s)=\frac{1}{2\pi i}\int_{\Gamma_j(s)}
(zI-\bar H_T(s))^{-1}\,dz,
\]
whereas,
\[
Q_j(s)=\frac{1}{2\pi i}\int_{\Gamma_j(s)}
(zI-A_T(s))^{-1}\,dz.
\]
Therefore,
\[
\bar P_j(s)-Q_j(s)
=
\frac{1}{2\pi i}\int_{\Gamma_j(s)}
\left[(zI-\bar H_T(s))^{-1}-(zI-A_T(s))^{-1}\right]dz.
\]
Using the resolvent identity,
\[
(zI-\bar H_T(s))^{-1}-(zI-A_T(s))^{-1}
=
(zI-\bar H_T(s))^{-1}B_T(s)(zI-A_T(s))^{-1},
\]
we obtain,
\[
\|\bar P_j(s)-Q_j(s)\|
\le
\frac{|\Gamma_j(s)|}{2\pi}
\sup_{z\in\Gamma_j(s)}
\|(zI-\bar H_T(s))^{-1}\|\,\|B_T(s)\|\,
\|(zI-A_T(s))^{-1}\|.
\]

Since,
\[
zI-\bar H_T(s)
=
zI-A_T(s)-B_T(s),
\]
we can write,
\[
zI-\bar H_T(s)
=
\left(I-B_T(s)(zI-A_T(s))^{-1}\right)(zI-A_T(s)).
\]
The assumption \(M_j(s)\|B_T(s)\|<1\) implies that,
\(
I-B_T(s)(zI-A_T(s))^{-1}
\)
is invertible for all \(z\in\Gamma_j(s)\). By the Neumann series estimate,
\[
\|(zI-\bar H_T(s))^{-1}\|
\le
\frac{M_j(s)}{(1-M_j(s)\|B_T(s)\|)}.
\]
Therefore,
\[
\|\bar P_j(s)-Q_j(s)\|
\le
\frac{|\Gamma_j(s)|}{2\pi}
\frac{M_j(s)^2\|B_T(s)\|}{(1-M_j(s)\|B_T(s)\|)}.
\]

Finally, since,
\[
\Pi_j(s)=\hat\eta_T(s)^{-1}\bar P_j(s)\hat\eta_T(s),
\qquad
\hat P_j(s)=\hat\eta_T(s)^{-1}Q_j(s)\hat\eta_T(s),
\]
we have,
\[
\Pi_j(s)-\hat P_j(s)
=
\hat\eta_T(s)^{-1}\big(\bar P_j(s)-Q_j(s)\big)\hat\eta_T(s).
\]
Taking norms gives,
\[
\|\Pi_j(s)-\hat P_j(s)\|
\le
\|\hat\eta_T(s)^{-1}\|\,\|\hat\eta_T(s)\|\,
\|\bar P_j(s)-Q_j(s)\|.
\]
This proves the estimate.
\end{proof}

\begin{theorem}[Dyson-pulled-back adiabatic estimate]
\label{Thm:MainNonHermitian}
Let \(\bar H_T(s)\) be the Hermitian counterpart of the non-Hermitian Hamiltonian
\(\hat H_T(s)\) associated with the Dyson map \(\hat\eta_T(s)\), and let
\(\bar P_j(s)\) denote the spectral projection associated with the eigenvalue
\(\bar E_j(s)\) of \(\bar H_T(s)\). Let \(\hat U_T(s)\) be the evolution generated by
the non-Hermitian equation \eqref{2.5}, and define,
\[
\widetilde U_{A,j}(s)
:=
\hat\eta_T^{-1}(s)\bar U_{A,j}(s)\hat\eta_T(0).
\]
Let,
\[
\Pi_j(s):=\hat\eta_T^{-1}(s)\bar P_j(s)\hat\eta_T(s).
\]
Then,
\[
\left\|
\bigl(\widetilde U_{A,j}(s)-\hat U_T(s)\bigr)\Pi_j(0)
\right\|
\le
\|\hat\eta_T(s)^{-1}\|\,\|\hat\eta_T(0)\|\,
\frac{\bar C_j(s)}{T}.
\]
\end{theorem}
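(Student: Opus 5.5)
We use the exact identity \eqref{3.13} together with the definition of $\Pi_j(0)$. Right-multiply \eqref{3.13} by $\Pi_j(0)=\hat\eta_T(0)^{-1}\bar P_j(0)\hat\eta_T(0)$. The factor $\hat\eta_T(0)$ at the right end of \eqref{3.13} cancels against $\hat\eta_T(0)^{-1}$, which gives
\[
\bigl(\widetilde U_{A,j}(s)-\hat U_T(s)\bigr)\Pi_j(0)
=
\hat\eta_T^{-1}(s)\,\bigl(\bar U_{A,j}(s)-\bar U_T(s)\bigr)\bar P_j(0)\,\hat\eta_T(0).
\]
This reduces the non-Hermitian statement to the Hermitian estimate, conjugated by the Dyson map.

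Before using this, we record why \eqref{3.13} holds. It rests on \eqref{3.12}, that is, $\hat U_T(s)=\hat\eta_T(s)^{-1}\bar U_T(s)\hat\eta_T(0)$. To check it, note that for any initial vector $\psi$, the vector $\phi(s)=\hat\eta_T(s)\hat U_T(s)\psi$ solves \eqref{2.9} with $\bar H_T$ given by \eqref{2.11}. The initial value is $\phi(0)=\hat\eta_T(0)\psi$. Uniqueness of solutions to the linear ODE for $\bar U_T$ then yields \eqref{3.11}. Combining this with the definition \eqref{3.7} gives \eqref{3.13}.

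Next we take operator norms in the displayed identity and use submultiplicativity:
\[
\bigl\|\bigl(\widetilde U_{A,j}(s)-\hat U_T(s)\bigr)\Pi_j(0)\bigr\|
\le
\|\hat\eta_T(s)^{-1}\|\,\bigl\|\bigl(\bar U_{A,j}(s)-\bar U_T(s)\bigr)\bar P_j(0)\bigr\|\,\|\hat\eta_T(0)\|.
\]
We then invoke Theorem~\ref{Thm:AdiabaticBar} to bound the middle factor by $\bar C_j(s)/T$. This is legitimate because $\bar H_T$ is assumed to be the Hermitian counterpart, so Assumption~\ref{AA} holds for it, and $\bar U_{A,j}$ is exactly the adiabatic evolution appearing in that theorem.

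No step is a real obstacle. The only point needing care is the bookkeeping of the cancellation $\hat\eta_T(0)\hat\eta_T(0)^{-1}=I$. This requires the projection at time $0$ to be conjugated by $\hat\eta_T(0)$, which is precisely how $\Pi_j(0)$ is defined in \eqref{ip}. It also requires the pullback $\widetilde U_{A,j}$ to carry $\hat\eta_T(0)$ on the right, as in \eqref{3.7}. Once this matching is confirmed, the bound follows immediately.
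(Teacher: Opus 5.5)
Your proposal is correct and is essentially the paper's proof: you right-multiply \eqref{3.13} by $\Pi_j(0)=\hat\eta_T(0)^{-1}\bar P_j(0)\hat\eta_T(0)$, take operator norms using submultiplicativity, and bound the middle factor with Theorem~\ref{Thm:AdiabaticBar}. Your ODE-uniqueness check of \eqref{3.11} is a small extra justification that the paper does not give; it simply asserts this identity from $|\phi(s)\rangle=\hat\eta_T(s)|\psi(s)\rangle$.
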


\begin{proof}
Starting from the Dyson pullback identity \eqref{3.13}, we have,
\[
 \widetilde U_{A,j}(s)-\hat U_T(s)
 =
 \hat\eta_T^{-1}(s)\big(\bar U_{A,j}(s)-\bar U_T(s)\big)\hat\eta_T(0).
\]
Multiplying on the right by,
\[
\Pi_j(0)=\hat\eta_T^{-1}(0)\bar P_j(0)\hat\eta_T(0)
\]
yields,
\[
 \big(\widetilde U_{A,j}(s)-\hat U_T(s)\big)\Pi_j(0)
 =
 \hat\eta_T^{-1}(s)
 \big(\bar U_{A,j}(s)-\bar U_T(s)\big)
 \bar P_j(0)\hat\eta_T(0).
\]
Taking operator norms in \(B(\mathbb{H})\) and using submultiplicativity gives,
\[
 \big\|
 \big(\widetilde U_{A,j}(s)-\hat U_T(s)\big)\Pi_j(0)
 \big\|
 \le
 \|\hat\eta_T^{-1}(s)\|\,\|\hat\eta_T(0)\|\,
 \big\|\big(\bar U_{A,j}(s)-\bar U_T(s)\big)\bar P_j(0)\big\|.
\]
By Theorem~\ref{Thm:AdiabaticBar},
\[
\left\|
\bigl(\bar U_{A,j}(s)-\bar U_T(s)\bigr)\bar P_j(0)
\right\|
\le
\frac{\bar C_j(s)}{T}.
\]
Therefore,
\[
\left\|
\bigl(\widetilde U_{A,j}(s)-\hat U_T(s)\bigr)\Pi_j(0)
\right\|
\le
\|\hat\eta_T(s)^{-1}\|\,\|\hat\eta_T(0)\|
\frac{\bar C_j(s)}{T}.
\]
\end{proof}

Together with Lemma~\ref{lem:PulledBackIntertwining},
Theorem~\ref{Thm:MainNonHermitian} shows that an initial state in
\(\operatorname{Ran}\Pi_j(0)\) is transported exactly by the pulled-back
adiabatic evolution into \(\operatorname{Ran}\Pi_j(s)\), while the exact
non-Hermitian evolution differs from this pulled-back adiabatic transport by the
explicit bound obtained in Theorem~\ref{Thm:MainNonHermitian}.

We now use Proposition~\ref{prop:ProjectionComparison} to reformulate the estimate in
terms of the spectral projection \(\hat P_j(0)\) of the original non-Hermitian Hamiltonian. This step upgrades the Dyson-pulled-back estimate to an estimate on the genuine spectral
subspace \(\operatorname{Ran}\hat P_j(0)\) of the original non-Hermitian Hamiltonian.

\begin{theorem}[Adiabatic theorem for non-Hermitian spectral subspaces]
\label{Thm:MainNonHermitianEigen}
Let \(\hat H_T(s)\) satisfy Assumption~\ref{Ass:NonHermitianAdiabatic}, and let
\(\hat\eta_T(s)\), \(\bar H_T(s)\), \(\bar P_j(s)\), \(\hat P_j(s)\), \(\hat U_T(s)\), \(\Pi_j(s)\), and \(\widetilde U_{A,j}(s)\) be as defined above. Assume also that Assumption~\ref{Ass:ContourResolvent} holds.
Define,
\[
\varepsilon_j(0)
:=
\|\hat\eta_T(0)^{-1}\|\,\|\hat\eta_T(0)\|\,
\frac{|\Gamma_j(0)|}{2\pi}
\frac{M_j(0)^2\|B_T(0)\|}
{(1-M_j(0)\|B_T(0)\|)}.
\]
Then, for every \(s\in[0,1]\),
\[
\left\|
\bigl(\widetilde U_{A,j}(s)-\hat U_T(s)\bigr)\hat P_j(0)
\right\|
\le
\|\hat\eta_T(s)^{-1}\|\,\|\hat\eta_T(0)\|
\frac{\bar C_j(s)}{T}
+
2\|\hat\eta_T(s)^{-1}\|\,\|\hat\eta_T(0)\|\,
\varepsilon_j(0).
\]
Consequently, for every,
\[
|\psi(0)\rangle\in\operatorname{Ran}\hat P_j(0),
\]
one has,
\[
\left\|
\bigl(\widetilde U_{A,j}(s)-\hat U_T(s)\bigr)|\psi(0)\rangle
\right\|
\le
\left[
\|\hat\eta_T(s)^{-1}\|\,\|\hat\eta_T(0)\|
\frac{\bar C_j(s)}{T}
+
2\|\hat\eta_T(s)^{-1}\|\,\|\hat\eta_T(0)\|\,
\varepsilon_j(0)
\right]
\||\psi(0)\rangle\|.
\]
\end{theorem}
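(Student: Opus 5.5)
The plan is to split the non-Hermitian spectral projection at $s=0$ into the Dyson-pulled-back projection plus a small correction:
\[
\hat P_j(0)=\Pi_j(0)+\bigl(\hat P_j(0)-\Pi_j(0)\bigr).
\]
Correspondingly, we bound
\[
\bigl\|(\widetilde U_{A,j}(s)-\hat U_T(s))\hat P_j(0)\bigr\|
\le
\bigl\|(\widetilde U_{A,j}(s)-\hat U_T(s))\Pi_j(0)\bigr\|
+
\bigl\|\widetilde U_{A,j}(s)-\hat U_T(s)\bigr\|\,\bigl\|\hat P_j(0)-\Pi_j(0)\bigr\|.
\]
The first term is handled directly by Theorem~\ref{Thm:MainNonHermitian}. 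It contributes exactly $\|\hat\eta_T(s)^{-1}\|\,\|\hat\eta_T(0)\|\,\bar C_j(s)/T$.

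For the second term, we bound the full difference of evolutions without any projection, using the pullback identity \eqref{3.13}:
\[
\|\widetilde U_{A,j}(s)-\hat U_T(s)\|\le \|\hat\eta_T(s)^{-1}\|\,\|\bar U_{A,j}(s)-\bar U_T(s)\|\,\|\hat\eta_T(0)\|.
\]
The key observation is that both $\bar U_T(s)$ and $\bar U_{A,j}(s)$ are unitary. The first is generated by the Hermitian operator $T\bar H_T(s)$, using Proposition~\ref{prop:Hbar-Hermitian}. The second is generated by $T\bar H_T(s)+i[\dot{\bar P}_j(s),\bar P_j(s)]$. This generator is also Hermitian: $\bar P_j(s)$ is an orthogonal projection, so $\dot{\bar P}_j(s)$ is Hermitian, the commutator $[\dot{\bar P}_j,\bar P_j]$ is anti-Hermitian, and hence $i[\dot{\bar P}_j,\bar P_j]$ is Hermitian. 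The triangle inequality then gives $\|\bar U_{A,j}(s)-\bar U_T(s)\|\le 2$. This is the source of the factor $2$ in the statement.

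Next we apply Proposition~\ref{prop:ProjectionComparison} at $s=0$. Its hypothesis $M_j(0)\|B_T(0)\|<1$ holds by Assumption~\ref{Ass:ContourResolvent}, so it yields $\|\Pi_j(0)-\hat P_j(0)\|\le\varepsilon_j(0)$. One point needs checking: $\bar P_j(0)$ in the definition of $\Pi_j(0)$ must be the same projection as the one enclosed by $\Gamma_j(0)$ in that proposition. This identification is exactly how $\bar P_j$ is labelled through Proposition~\ref{prop:CriterionHermitianCounterpart}, so I take it as given. Combining the pieces gives the second contribution $2\|\hat\eta_T(s)^{-1}\|\,\|\hat\eta_T(0)\|\,\varepsilon_j(0)$, and hence the operator-norm bound.

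The vector statement follows at once. For $|\psi(0)\rangle\in\operatorname{Ran}\hat P_j(0)$ we have $\hat P_j(0)|\psi(0)\rangle=|\psi(0)\rangle$, so applying the operator bound to this vector gives the claimed inequality. The only step requiring care is the unitarity of $\bar U_{A,j}$, specifically the Hermiticity of the adiabatic correction term; everything else is direct assembly of Theorem~\ref{Thm:MainNonHermitian}, identity \eqref{3.13}, and Proposition~\ref{prop:ProjectionComparison}.
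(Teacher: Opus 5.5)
Your proposal is correct and follows the paper's proof essentially step for step. It uses the same splitting $\hat P_j(0)=\Pi_j(0)+(\hat P_j(0)-\Pi_j(0))$, Theorem~\ref{Thm:MainNonHermitian} for the first term, identity \eqref{3.13} with $\|\bar U_{A,j}(s)-\bar U_T(s)\|\le 2$ for the second, and Proposition~\ref{prop:ProjectionComparison} at $s=0$ for $\varepsilon_j(0)$; your explicit check that $i[\dot{\bar P}_j(s),\bar P_j(s)]$ is Hermitian, so that $\bar U_{A,j}$ is genuinely unitary, fills in a point the paper only asserts.
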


\begin{proof}
We decompose,
\[
(\widetilde U_{A,j}(s)-\hat U_T(s))\hat P_j(0)
=
(\widetilde U_{A,j}(s)-\hat U_T(s))\Pi_j(0)
+
(\widetilde U_{A,j}(s)-\hat U_T(s))(\hat P_j(0)-\Pi_j(0)).
\]
Taking norms gives,
\[
\left\|
(\widetilde U_{A,j}(s)-\hat U_T(s))\hat P_j(0)
\right\|
\le
\left\|
(\widetilde U_{A,j}(s)-\hat U_T(s))\Pi_j(0)
\right\| +
\|\widetilde U_{A,j}(s)-\hat U_T(s)\|\,
\|\hat P_j(0)-\Pi_j(0)\|.
\]
By Theorem~\ref{Thm:MainNonHermitian},
\[
\left\|
(\widetilde U_{A,j}(s)-\hat U_T(s))\Pi_j(0)
\right\|
\le
\|\hat\eta_T(s)^{-1}\|\,\|\hat\eta_T(0)\|
\frac{\bar C_j(s)}{T}.
\]
Moreover, from \eqref{3.13},
\[
\widetilde U_{A,j}(s)-\hat U_T(s)
=
\hat\eta_T^{-1}(s)
(\bar U_{A,j}(s)-\bar U_T(s))
\hat\eta_T(0).
\]
Since \(\bar U_{A,j}(s)\) and \(\bar U_T(s)\) are unitary in the Hermitian
representation,
\[
\|\bar U_{A,j}(s)-\bar U_T(s)\|\le 2.
\]
Therefore,
\[
\|\widetilde U_{A,j}(s)-\hat U_T(s)\|
\le
2\|\hat\eta_T(s)^{-1}\|\,\|\hat\eta_T(0)\|.
\]
By Proposition~\ref{prop:ProjectionComparison},
\[
\|\hat P_j(0)-\Pi_j(0)\|
\le
\varepsilon_j(0).
\]
Combining the above estimates gives the result.
\end{proof}

\begin{corollary}[Adiabatic invariance estimate for non-Hermitian spectral subspaces]
\label{Cor:NonHermitianEigenspaceInvariance}
Assume the hypotheses of Theorem~\ref{Thm:MainNonHermitianEigen}.
Define,
\[
\varepsilon_j(s)
:=
\|\hat\eta_T(s)^{-1}\|\,\|\hat\eta_T(s)\|\,
\frac{|\Gamma_j(s)|}{2\pi}
\frac{M_j(s)^2\|B_T(s)\|}
{1-M_j(s)\|B_T(s)\|}.
\]
Then,
\[
\begin{aligned}
\left\|
(I-\hat P_j(s))\hat U_T(s)\hat P_j(0)
\right\|
&\le
\|I-\hat P_j(s)\|
\|\hat\eta_T(s)^{-1}\|\,\|\hat\eta_T(0)\|
\frac{\bar C_j(s)}{T}  \\
&\quad
+2\|I-\hat P_j(s)\|
\|\hat\eta_T(s)^{-1}\|\,\|\hat\eta_T(0)\|
\varepsilon_j(0) \\
&\quad
+\|I-\hat P_j(s)\|
\|\hat\eta_T(s)^{-1}\|\,\|\hat\eta_T(0)\|
\bigl(\varepsilon_j(s)+\varepsilon_j(0)\bigr).
\end{aligned}
\]
\end{corollary}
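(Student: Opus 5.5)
We decompose $(I-\hat P_j(s))\hat U_T(s)\hat P_j(0)$ by inserting the pulled-back adiabatic transport $\widetilde U_{A,j}(s)$ and the pulled-back projection $\Pi_j(s)$. The decomposition is
\[
(I-\hat P_j(s))\hat U_T(s)\hat P_j(0)
=
-(I-\hat P_j(s))\bigl(\widetilde U_{A,j}(s)-\hat U_T(s)\bigr)\hat P_j(0)
+(I-\hat P_j(s))\widetilde U_{A,j}(s)\hat P_j(0).
\]
The first term is bounded by $\|I-\hat P_j(s)\|$ times the estimate of Theorem~\ref{Thm:MainNonHermitianEigen}. This produces exactly the first two lines of the claimed bound: the pulled-back Hermitian adiabatic error and the $2\|\hat\eta_T(s)^{-1}\|\,\|\hat\eta_T(0)\|\,\varepsilon_j(0)$ contribution.

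For the second term we use Lemma~\ref{lem:PulledBackIntertwining}, in the form $\widetilde U_{A,j}(s)\Pi_j(0)=\Pi_j(s)\widetilde U_{A,j}(s)$. Writing $\hat P_j(0)=\Pi_j(0)+(\hat P_j(0)-\Pi_j(0))$ gives
\[
(I-\hat P_j(s))\widetilde U_{A,j}(s)\hat P_j(0)
=(I-\hat P_j(s))\Pi_j(s)\widetilde U_{A,j}(s)
+(I-\hat P_j(s))\widetilde U_{A,j}(s)(\hat P_j(0)-\Pi_j(0)).
\]
Since $(I-\hat P_j(s))\hat P_j(s)=0$, the first piece equals $(I-\hat P_j(s))(\Pi_j(s)-\hat P_j(s))\widetilde U_{A,j}(s)$.

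We then bound $\|\widetilde U_{A,j}(s)\|\le\|\hat\eta_T(s)^{-1}\|\,\|\hat\eta_T(0)\|$, using that $\bar U_{A,j}(s)$ is unitary (its generator $T\bar H_T+i[\dot{\bar P}_j,\bar P_j]$ is Hermitian). Proposition~\ref{prop:ProjectionComparison}, applied at $s$ and at $0$, gives $\|\Pi_j(s)-\hat P_j(s)\|\le\varepsilon_j(s)$ and $\|\hat P_j(0)-\Pi_j(0)\|\le\varepsilon_j(0)$; its smallness hypothesis holds at every $s$ by Assumption~\ref{Ass:ContourResolvent}. Together these yield the third line, $\|I-\hat P_j(s)\|\,\|\hat\eta_T(s)^{-1}\|\,\|\hat\eta_T(0)\|\,(\varepsilon_j(s)+\varepsilon_j(0))$, and the triangle inequality combines the three contributions.

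Most of the argument is bookkeeping. The step needing care is the order of operators in the second term. The factor $\|I-\hat P_j(s)\|$ must be kept on the left; it need not be $1$, because $\hat P_j(s)$ is non-orthogonal. The intertwining must be applied so that $\Pi_j(s)$ sits directly next to $I-\hat P_j(s)$, which is what allows the difference $\Pi_j(s)-\hat P_j(s)$ to appear. We must also check that the unitarity of $\bar U_{A,j}$ gives the norm bound on $\widetilde U_{A,j}(s)$ without any extra factor of $2$.
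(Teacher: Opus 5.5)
Your proposal is correct and follows the paper's proof essentially line for line. It uses the same splitting through $\widetilde U_{A,j}(s)$ and the same appeal to Theorem~\ref{Thm:MainNonHermitianEigen} for the first term. It then writes $\hat P_j(0)=\Pi_j(0)+(\hat P_j(0)-\Pi_j(0))$ and combines the intertwining relation with $(I-\hat P_j(s))\hat P_j(s)=0$, exactly as the paper does, to produce the $\varepsilon_j(s)+\varepsilon_j(0)$ term. You also justify $\|\widetilde U_{A,j}(s)\|\le\|\hat\eta_T(s)^{-1}\|\,\|\hat\eta_T(0)\|$ through the Hermiticity of $i[\dot{\bar P}_j,\bar P_j]$, a detail the paper only asserts.
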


\begin{proof}
We decompose,
\[
(I-\hat P_j(s))\hat U_T(s)\hat P_j(0)
=
(I-\hat P_j(s))(\hat U_T(s)-\widetilde U_{A,j}(s))\hat P_j(0) +
(I-\hat P_j(s))\widetilde U_{A,j}(s)\hat P_j(0).
\]
The first term is bounded by Theorem~\ref{Thm:MainNonHermitianEigen}. For the
second term, write,
\[
\hat P_j(0)=\Pi_j(0)+(\hat P_j(0)-\Pi_j(0)).
\]
Using the pulled-back intertwining relation,
\[
\widetilde U_{A,j}(s)\Pi_j(0)=\Pi_j(s)\widetilde U_{A,j}(s),
\]
we obtain,
\[
(I-\hat P_j(s))\widetilde U_{A,j}(s)\Pi_j(0)
=
(I-\hat P_j(s))(\Pi_j(s)-\hat P_j(s))\widetilde U_{A,j}(s).
\]
Therefore,
\[
\begin{aligned}
\|(I-\hat P_j(s))\widetilde U_{A,j}(s)\hat P_j(0)\|
&\le
\|I-\hat P_j(s)\|\,
\|\Pi_j(s)-\hat P_j(s)\|\,
\|\widetilde U_{A,j}(s)\| \\
&\quad+
\|I-\hat P_j(s)\|\,
\|\widetilde U_{A,j}(s)\|\,
\|\hat P_j(0)-\Pi_j(0)\|.
\end{aligned}
\]
By Proposition~\ref{prop:ProjectionComparison},
\[
\|\Pi_j(s)-\hat P_j(s)\|\le \varepsilon_j(s),
\qquad
\|\Pi_j(0)-\hat P_j(0)\|\le \varepsilon_j(0).
\]
Moreover,
\[
\|\widetilde U_{A,j}(s)\|
\le
\|\hat\eta_T(s)^{-1}\|\,\|\hat\eta_T(0)\|.
\]
Combining these estimates gives the result.
\end{proof}

\begin{remark}
The projection \(\hat P_j(s)\) need not be orthogonal with respect to the
standard inner product. Therefore,
\(
\|(I-\hat P_j(s))x\|
\)
measures the component selected by the complementary projection and should not,
in general, be interpreted as the orthogonal distance from \(x\) to
\(\operatorname{Ran}\hat P_j(s)\).
\end{remark}

The preceding corollary gives a quantitative leakage estimate in the original
non-Hermitian representation. It controls the component selected by the
complementary projection and therefore gives a statement directly in terms of
the genuine spectral subspaces of the original Hamiltonian. The next section
illustrates the hypotheses and the resulting estimate through a two-level
non-Hermitian example.

\section{Example: A nontrivial two-level non-Hermitian model}

In this section we present a two-level example illustrating the full structure of
the main theorem. The original Hamiltonian is non-Hermitian, the Dyson map is
time-dependent, and the Hermitian counterpart is nonconstant. Consequently, both
the pulled-back Hermitian adiabatic error and the projection-comparison error
appear in the final estimate.

Let,
\[
\sigma_x=
\begin{pmatrix}
0&1\\
1&0
\end{pmatrix},
\qquad
\sigma_y=
\begin{pmatrix}
0&-i\\
i&0
\end{pmatrix},
\qquad
\sigma_z=
\begin{pmatrix}
1&0\\
0&-1
\end{pmatrix}.
\]
Let \(\Delta>0\), \(a>0\), and \(\Omega_0>0\). Define,
\[
r(s):=a\sin(\pi s),
\qquad
\Omega(s):=\Omega_0\sin(2\pi s),
\qquad s\in[0,1].
\]
Consider the non-Hermitian Hamiltonian,
\[
\begin{aligned}
\hat H_T(s)
&=
-\frac{i}{T}r'(s)\sigma_x +
\frac{1}{2}
\left(
\Omega(s)\cosh(2r(s))
+
i\Delta\sinh(2r(s))
\right)\sigma_y  \\
&\quad+
\frac{1}{2}
\left(
\Delta\cosh(2r(s))
-
i\Omega(s)\sinh(2r(s))
\right)\sigma_z .
\end{aligned}
\]
This Hamiltonian is generally non-Hermitian, because the coefficients of the
Hermitian matrices \(\sigma_y\), \(\sigma_z\), and \(\sigma_x\) are not all real.
Define,
\[
\hat\eta_T(s):=e^{r(s)\sigma_x}.
\]
Since \(\sigma_x\) is Hermitian and \(r(s)\) is real, \(\hat\eta_T(s)\) is
positive and invertible for every \(s\in[0,1]\). Moreover,
\[
\hat\eta_T(s)^{-1}=e^{-r(s)\sigma_x}.
\]
Using \(\sigma_x^2=I\), we have,
\[
e^{r(s)\sigma_x}
=
\cosh(r(s))I+\sinh(r(s))\sigma_x.
\]
Since \(\sigma_x\) commutes with \(e^{r(s)\sigma_x}\), it follows that,
\[
\dot{\hat\eta}_T(s)\hat\eta_T(s)^{-1}
=
r'(s)\sigma_x.
\]
Therefore the Dyson correction term is,
\[
B_T(s)
=
\frac{i}{T}\dot{\hat\eta}_T(s)\hat\eta_T(s)^{-1}
=
\frac{i}{T}r'(s)\sigma_x.
\]
Since \(\|\sigma_x\|=1\), we obtain,
\[
\|B_T(s)\|
=
\frac{|r'(s)|}{T}.
\]
For \(r(s)=a\sin(\pi s)\),
\[
r'(s)=a\pi\cos(\pi s),
\]
and hence,
\[
\|B_T(s)\|
=
\frac{a\pi|\cos(\pi s)|}{T}
\le
\frac{a\pi}{T}.
\]
Thus the Dyson correction bound holds with,
\[
b_T(s)=|r'(s)|,
\qquad
\beta_{T}=a\pi.
\]
We now compute the Dyson transform of \(\hat H_T(s)\). From the Pauli matrix
relations,
\[
e^{r(s)\sigma_x}\sigma_y e^{-r(s)\sigma_x}
=
\cosh(2r(s))\sigma_y
+
i\sinh(2r(s))\sigma_z
\]
and,
\[
e^{r(s)\sigma_x}\sigma_z e^{-r(s)\sigma_x}
=
\cosh(2r(s))\sigma_z
-
i\sinh(2r(s))\sigma_y.
\]
Also,
\[
e^{r(s)\sigma_x}\sigma_x e^{-r(s)\sigma_x}
=
\sigma_x.
\]
A direct substitution gives,
\[
\hat\eta_T(s)\hat H_T(s)\hat\eta_T(s)^{-1}
=
\frac{\Delta}{2}\sigma_z
+
\frac{\Omega(s)}{2}\sigma_y
-
\frac{i}{T}r'(s)\sigma_x.
\]
Adding the Dyson correction term, we obtain,
\[
\begin{aligned}
\bar H_T(s)
&=
\hat\eta_T(s)\hat H_T(s)\hat\eta_T(s)^{-1}
+
\frac{i}{T}\dot{\hat\eta}_T(s)\hat\eta_T(s)^{-1} \\
&=
\left(
\frac{\Delta}{2}\sigma_z
+
\frac{\Omega(s)}{2}\sigma_y
-
\frac{i}{T}r'(s)\sigma_x
\right)
+
\frac{i}{T}r'(s)\sigma_x \\
&=
\frac{\Delta}{2}\sigma_z
+
\frac{\Omega(s)}{2}\sigma_y.
\end{aligned}
\]
Hence the Dyson transform of \(\hat H_T(s)\) is the Hermitian Hamiltonian,
\[
\bar H_T(s)
=
\frac{\Delta}{2}\sigma_z
+
\frac{\Omega(s)}{2}\sigma_y.
\]
Since \(\bar H_T(s)\) is Hermitian, the metric,
\[
\hat\Theta_T(s):=\hat\eta_T(s)^*\hat\eta_T(s)
\]
satisfies the compatibility relation,
\[
\hat H_T(s)^*\hat\Theta_T(s)-\hat\Theta_T(s)\hat H_T(s)
=
\frac{i}{T}\dot{\hat\Theta}_T(s).
\]
Thus the chosen Dyson map is dynamically compatible with the non-Hermitian
evolution considered in the example.

We next verify the Hermitian adiabatic assumption for \(\bar H_T(s)\). Since,
\[
\bar H_T(s)
=
\frac12
\begin{pmatrix}
\Delta&-i\Omega(s)\\
i\Omega(s)&-\Delta
\end{pmatrix},
\]
we have,
\[
\det(\bar H_T(s)-\lambda I)
=
\lambda^2-\frac{\Delta^2+\Omega(s)^2}{4}.
\]
Thus the eigenvalues are,
\[
\bar E_{\pm}(s)
=
\pm \frac{1}{2}R(s),
\qquad
R(s):=\sqrt{\Delta^2+\Omega(s)^2}.
\]
Therefore the spectral gap is,
\[
\bar g(s)
=
|\bar E_+(s)-\bar E_-(s)|
=
R(s).
\]
Since \(\Delta>0\), we have,
\[
R(s)=\sqrt{\Delta^2+\Omega(s)^2}\ge\Delta>0.
\]
Hence the eigenvalues of \(\bar H_T(s)\) remain separated on
\([0,1]\).
Moreover,
\[
\dot{\bar H}_T(s)=\frac{\Omega'(s)}{2}\sigma_y,
\qquad
\ddot{\bar H}_T(s)=\frac{\Omega''(s)}{2}\sigma_y.
\]
Since,
\[
\Omega'(s)=2\pi\Omega_0\cos(2\pi s),
\qquad
\Omega''(s)=-4\pi^2\Omega_0\sin(2\pi s),
\]
the Hermitian adiabatic constant is generally nonzero. In this example,
\[
\bar C_{\pm}(s)
=
\frac{|\Omega'(s)|}{2R(s)^2}
+
\frac{|\Omega'(0)|}{2R(0)^2}
+
\int_0^s
\left(
\frac{|\Omega''(u)|}{2R(u)^2}
+
\frac{5}{2}\frac{|\Omega'(u)|^2}{R(u)^3}
\right)du.
\]
Therefore the pulled-back Hermitian adiabatic contribution,
\[
\frac{\bar C_{\pm}(s)}{T}
\]
is nontrivial.

We now check the spectral admissibility of the original non-Hermitian
Hamiltonian. Since,
\[
\hat\eta_T(s)\hat H_T(s)\hat\eta_T(s)^{-1}
=
\bar H_T(s)-B_T(s),
\]
the Hamiltonian \(\hat H_T(s)\) is similar to,
\[
\bar H_T(s)-B_T(s)
=
\frac{\Delta}{2}\sigma_z
+
\frac{\Omega(s)}{2}\sigma_y
-
\frac{i}{T}r'(s)\sigma_x.
\]
Hence \(\hat H_T(s)\) and \(\bar H_T(s)-B_T(s)\) have the same eigenvalues.
Writing,
\[
M_T(s):=\bar H_T(s)-B_T(s),
\]
we obtain,
\[
M_T(s)
=
\begin{pmatrix}
\dfrac{\Delta}{2}
&
-\dfrac{i\Omega(s)}{2}-\dfrac{i r'(s)}{T}
\\[6pt]
\dfrac{i\Omega(s)}{2}-\dfrac{i r'(s)}{T}
&
-\dfrac{\Delta}{2}
\end{pmatrix}.
\]
A direct computation gives,
\[
\det(M_T(s)-\lambda I)
=
\lambda^2
-
\frac{\Delta^2+\Omega(s)^2}{4}
+
\frac{(r'(s))^2}{T^2}.
\]
Therefore the eigenvalues of \(\hat H_T(s)\) are,
\[
\lambda_{\pm}(s)
=
\pm
\frac{1}{2}
\sqrt{
\Delta^2+\Omega(s)^2-\frac{4(r'(s))^2}{T^2}
}.
\]
These eigenvalues are real and distinct whenever,
\[
\Delta^2+\Omega(s)^2-\frac{4(r'(s))^2}{T^2}>0,
\qquad s\in[0,1].
\]
Since \(\Delta^2+\Omega(s)^2\ge\Delta^2\) and,
\[
|r'(s)|\le a\pi,
\]
it is enough to assume,
\[
T>\frac{2a\pi}{\Delta}.
\]
Under this condition, the two eigenvalues of \(\hat H_T(s)\) are real,
distinct, and non-crossing. Since \(\hat H_T(s)\) is a \(2\times2\) matrix with
two distinct eigenvalues, it is diagonalizable. Hence \(\hat H_T(s)\) satisfies
Assumption~\ref{Ass:NonHermitianAdiabatic}.

It remains to discuss the contour-resolvent condition. Define,
\[
A_T(s):=\hat\eta_T(s)\hat H_T(s)\hat\eta_T(s)^{-1}
=
\bar H_T(s)-B_T(s).
\]
Since,
\[
\|B_T(s)\|\le \frac{a\pi}{T},
\]
we have \(A_T(s)\to \bar H_T(s)\) uniformly on \([0,1]\) as \(T\to\infty\).
The Hermitian operator \(\bar H_T(s)\) has spectral gap bounded below by
\(\Delta>0\). Hence, for sufficiently large \(T\), the two eigenvalues
of \(A_T(s)\) remain uniformly separated. Therefore one may choose contours
\(\Gamma_{\pm}(s)\) enclosing the corresponding eigenvalue of \(A_T(s)\) and no
other point of \(\sigma(A_T(s))\), with resolvent bounds satisfying,
\[
M_{\pm}(s)
:=
\sup_{z\in\Gamma_{\pm}(s)}
\|(zI-A_T(s))^{-1}\|
\le C_{\pm}
\]
for some constant \(C_{\pm}\) independent of \(T\), for all sufficiently large
\(T\). Consequently,
\[
M_{\pm}(s)\|B_T(s)\|
\le
C_{\pm}\frac{a\pi}{T}.
\]
Thus, for sufficiently large \(T\),
\[
M_{\pm}(s)\|B_T(s)\|<1.
\]

Therefore, for sufficiently large \(T\), all hypotheses of
Theorem~\ref{Thm:MainNonHermitianEigen} are satisfied. Hence, for every initial
state,
\[
|\psi(0)\rangle\in\operatorname{Ran}\hat P_{\pm}(0),
\]
the theorem gives,
\[
\begin{aligned}
\left\|
\bigl(\widetilde U_{A,\pm}(s)-\hat U_T(s)\bigr)|\psi(0)\rangle
\right\|
&\le
\left[
\|\hat\eta_T(s)^{-1}\|\,\|\hat\eta_T(0)\|
\frac{\bar C_{\pm}(s)}{T}
\right.\\
&\quad\left.
+
2\|\hat\eta_T(s)^{-1}\|\,\|\hat\eta_T(0)\|
\varepsilon_{\pm}(0)
\right]
\||\psi(0)\rangle\|.
\end{aligned}
\]
Since \(r(0)=0\), we have,
\[
\hat\eta_T(0)=I.
\]
Also, since the eigenvalues of \(\hat\eta_T(s)^{-1}=e^{-r(s)\sigma_x}\) are
\(e^{-r(s)}\) and \(e^{r(s)}\), we have,
\[
\|\hat\eta_T(s)^{-1}\|=e^{|r(s)|}.
\]
Therefore the estimate becomes,
\[
\left\|
\bigl(\widetilde U_{A,\pm}(s)-\hat U_T(s)\bigr)|\psi(0)\rangle
\right\|
\le
e^{|r(s)|}
\left[
\frac{\bar C_{\pm}(s)}{T}
+
2\varepsilon_{\pm}(0)
\right]
\||\psi(0)\rangle\|.
\]

This example illustrates the full structure of the non-Hermitian estimate. The
Hermitian counterpart \(\bar H_T(s)\) is time-dependent, so the Hermitian
adiabatic contribution \(\bar C_{\pm}(s)/T\) is generally nonzero. The Dyson map
is also time-dependent, so the Dyson correction \(B_T(s)\) is nonzero and gives
rise to the projection-comparison contribution \(\varepsilon_{\pm}(0)\). Hence
both terms appearing in the main theorem are present in this example.

\section{Discussion and Conclusion}

The main result of this paper is a quantitative adiabatic estimate for a class of
finite-dimensional non-Hermitian Schr\"odinger dynamics. Starting from the exact
evolution, a dynamically compatible metric operator is constructed, and its
positive square root defines a time-dependent Dyson map. This Dyson
transformation yields a Hermitian counterpart of the original non-Hermitian
system whenever the transformed Hamiltonian satisfies the Hermitian adiabatic
assumption.

A key aspect of the analysis is that the Dyson-pulled-back projections are not,
in general, identical to the spectral projections of the original non-Hermitian
Hamiltonian. To address this issue, the Hermitian counterpart is expressed as a
perturbation of a similarity transform of the original Hamiltonian. A contour-resolvent estimate is then used to compare the corresponding projection
families. This comparison is responsible for the second term in the final estimate, while
the first term comes from the Hermitian adiabatic theorem after Dyson pullback.

The two-level example in Section~5 illustrates how these assumptions can be
verified in a concrete finite-dimensional model. In that example, the Hermitian counterpart is non-constant and the Dyson correction is nonzero, so both the
pulled-back Hermitian adiabatic term and the projection-comparison term appear
explicitly.

The present framework has several limitations. First, the analysis is restricted
to finite-dimensional systems and does not address questions involving unbounded
operators, domain issues, or infinite-dimensional spectral theory. Second, the
metric operator is constructed from the exact evolution itself. While this
establishes the existence of a dynamically compatible metric and Dyson map, it
does not provide an independent constructive procedure for determining them in
concrete models. Consequently, the applicability of the theorem depends on
verifying the Hermitian adiabatic assumption for the Dyson-transformed
Hamiltonian and the contour-resolvent condition needed for comparing spectral
projections.

There are several natural directions for future investigation. One is the
extension of the present theory to infinite-dimensional settings. Another is the
development of constructive criteria for obtaining controlled Dyson maps without
prior knowledge of the exact evolution. It would also be desirable to identify
broad classes of non-Hermitian Hamiltonians for which the Hermitian adiabatic
assumption and the contour-resolvent estimates can be verified explicitly.

In summary, the paper shows that a quantitative adiabatic estimate can be
transferred from a Hermitian representation to a class of non-Hermitian dynamics
and then reformulated in terms of the spectral projections of the original
Hamiltonian. The resulting estimate is expressed directly in the original
non-Hermitian spectral subspaces, rather than only in their Dyson-transformed
counterparts.

\bibliography{sn-bibliography}

\end{document}